\def\algbackskip{\hskip-\ALG@thistlm}
\newcommand*{\algrule}[1][\algorithmicindent]{%
	\makebox[#1][l]{%
		\hspace*{.2em}
		\vrule height .75\baselineskip depth .25\baselineskip
	}
}
\def\ALG@printindent{%
	\ifnum \theALG@nested>0
	\ifx\ALG@text\ALG@x@notext
	\else
	\unskip
	\ALG@printindent@tempcnta=1
	\loop
	\algrule[\csname ALG@ind@\the\ALG@printindent@tempcnta\endcsname]%
	\advance \ALG@printindent@tempcnta 1
	\ifnum \ALG@printindent@tempcnta<\numexpr\theALG@nested+1\relax
	\repeat
	\fi
	\fi
}
\patchcmd{\ALG@doentity}{\noindent\hskip\ALG@tlm}{\ALG@printindent}{}{\errmessage{failed to patch}}
\patchcmd{\ALG@doentity}{\item[]\nointerlineskip}{}{}{} 
\theoremstyle{plain}
\newtheorem{thm}{Theorem}
\newtheorem{cor}[thm]{Corollary}
\theoremstyle{definition}
\newtheorem{defn}[thm]{Definition}
\theoremstyle{remark}
\newcommand{\conv}[3]{#1 \otimes #2 \, (#3)}
\newcommand{\posPart}[1]{\left[#1\right]^+}
\newcommand{\IndicFun}[1]{\mathbf{1}_{\left\{ #1\right\}}}
\begin{document}


\title{Improving Performance Bounds for Weighted~Round-Robin Schedulers under Constrained~Cross-Traffic}

\author{
	Vlad-Cristian Constantin, Paul Nikolaus, Jens Schmitt\\
	Distributed Computer Systems (DISCO) Lab\\
	TU Kaiserslautern\\
	\texttt{\{constantin,nikolaus,jschmitt\}@cs.uni-kl.de}
}
\date{}

\maketitle

\begin{abstract}
Weighted round robin (WRR) is a simple, efficient packet scheduler providing low latency and fairness by assigning flow weights that define the number of possible packets to be sent consecutively.
A variant of WRR that mitigates its tendency to increase burstiness, called interleaved weighted round robin (IWRR), has seen analytical treatment recently \cite{TLBB21}; a network calculus approach was used to obtain the best-possible strict service curve.
From a different perspective, WRR can also be interpreted as an emulation of an idealized fair scheduler known as generalized processor sharing (GPS).
Inspired by profound literature results on the performance analysis of GPS, we show that both, WRR and IWRR, belong to a larger class of fair schedulers called bandwidth-sharing policies.
We use this insight to derive new strict service curves for both schedulers that, under the additional assumption of constrained cross-traffic flows, can significantly improve the state-of-the-art results and lead to smaller delay bounds.
\end{abstract}

\section{Introduction}

For a long time, round-robin schedulers have been found appealing for their simplicity and corresponding efficient implementation as well as their inherent fairness \cite{Hahne91} (see \cite{Kleinrock64} for an early reference). 
Weighted round robin (WRR) is a frequently used scheduling algorithm in packet-switched networks as well as in real-time processing systems to provide a different resource allocation among flows (or tasks).
In its basic version, sometimes called plain WRR, we (conceptually) have a queue for each flow at a server and service is provided in rounds; in each round, a flow $ f_i $ receives the opportunity to send $ w_i $ packets consecutively.
The term weighted round robin was coined in \cite{KSC91} in the context of ATM (i.e., a network with constant packet sizes), where also some modifications, such as interleaving, were suggested.
WRR has been considered intensively in the literature on communication networks: e.g., investigating variants such as multi-class or multi-server WRR \cite{CM99, XJ04}; 
or, in applications such as in the IEEE Standard 802.1Q \cite{IEEE802.1Q}, load balancing of cloud infrastructures \cite{LDK13, WC14}, or networks on chip (NoC) \cite{QLD09, HKR12}; and found usage in real-world equipment, e.g., in Ethernet switches \cite{HPEFlexNetwork}.
In contrast to another popular round-robin scheduler, deficit round robin (DRR) \cite{SV95}, WRR does not assume that the size of the head-of-the-line packet is known at each queue.
Therefore, it is also used in distributed queueing scenarios, for instance, as the uplink scheduler in an IEEE Standard 802.16 network \cite{CLME06}.

WRR has a tendency to increase the burstiness of flows due to several packets being sent back-to-back by the same flow per round. To mitigate this, interleaved weighted round robin (IWRR) introduces \emph{cycles} within a round to disperse the packet transmissions from the same flow.
This change is simple and the algorithm's complexity remains favorable (at $ \mathcal{O}(1) $ work per packet \cite{SV95}), though, it makes the mathematical performance analysis more challenging.
The performance analysis of IWRR has attracted some recent attention \cite{TLBB21}, using network calculus.

Network calculus \cite{Cruz91_1, Cruz91_2, Chang00, LBT01, BBLC18} is a versatile deterministic framework to derive worst-case per-flow performance guarantees.
To that end, deterministic constraints on arrivals and service are assumed.
These constraints are abstracted by so-called arrival and service curves, respectively.
While the former allows us to forego any distribution assumptions on inter-arrival times, the latter comes with the power of scheduling abstraction \cite{CS12-1}.
The so-called leftover service curve can then directly be used to calculate per-flow performance bounds.
Network calculus's level of modularization reduces the problem of improving performance bounds to finding a larger leftover service curve.

\begin{figure}[t]
	\centering
	\includegraphics[height = 0.2\textheight]{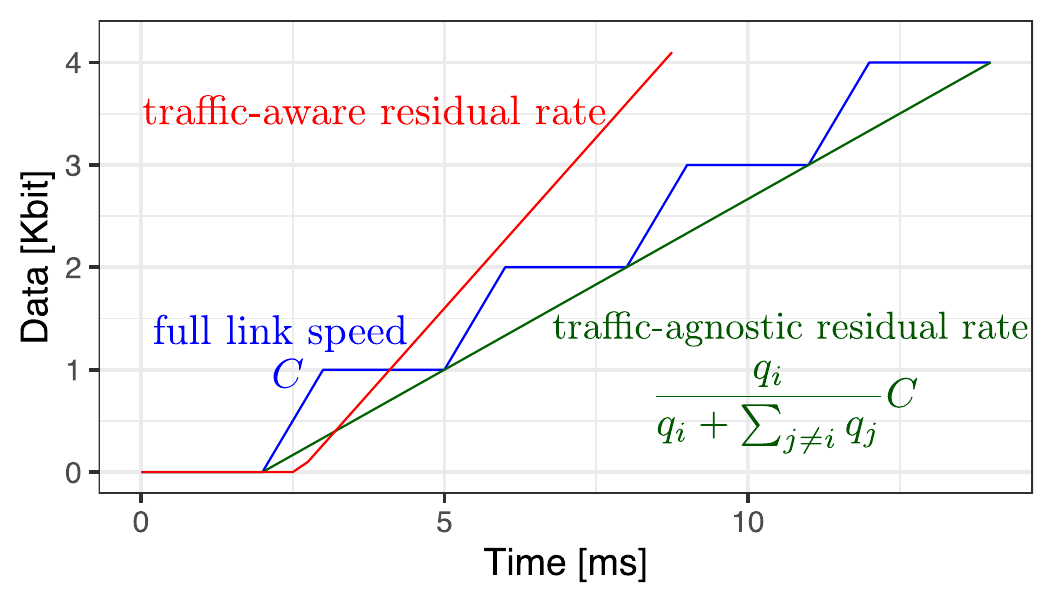}
	\caption{\centering State-of-the-art service curves for WRR and our service curve under constrained cross-traffic. \label{fig:linear-staircase}}
\end{figure}

Several different leftover service curves for WRR have been provided in the network calculus literature.
In \cite[p. 200]{BBLC18}, three service curves are derived for WRR.
The first one exploits knowledge on possible packet sequences by introducing so-called \emph{packet curves} \cite{BFG12}. 
Since we do not make this assumption, we only focus on the latter two.
Let us assume for the moment that the server provides a constant rate $ C. $ 
The least data per round for the flow of interest $ f_i $ is $ q_i \coloneqq w_i l_i^{\min} $ and the most data for a cross-flow $ f_{j \neq i} $ is $ q_j \coloneqq w_j l_j^{\max} $; here $ l_i^{\min} $ and $ l_j^{\max} $ denote the minimum and maximum packet size of the respective flow.
One leftover service curve is a stair function that closely models the packet scheduler's behavior of alternating between \emph{full link speed} (rate $ C $) and plateaus (rate 0) within a single round \cite{BD22}; the other leftover service curve has a shifted linear shape (with some initial latency) with \emph{traffic-agnostic residual rate} 
$ \frac{q_i}{q_i + \sum_{j \neq i} q_j} C $ 
that is just below the stair (illustrated in Figure~\ref{fig:linear-staircase}).
For IWRR, a leftover service curve for IWRR is derived that dominates all service curves obtained by the WRR analysis \cite{TLBB21}.

The aforementioned leftover service curves do not take into account constraints on the cross-traffic, even though it is a common case to have such constraints in application scenarios where worst-case performance guarantees are desired.
A key observation is that, given such constraints, all cross-flows will not remain backlogged for longer time intervals.
In fact, this observation has already been used to improve performance bounds for generalized process sharing (GPS), an idealized fair scheduler that achieves nearly perfect isolation and fairness \cite{GM90}.
It appeared in the seminal work on GPS by Parekh and Gallagher \cite{PG93} exploiting the \emph{feasible ordering} of flows. 
Later, this concept was generalized to \emph{feasible partitions} \cite{ZTK95}, larger classes of arrival curves and service curves \cite[pp. 68]{Chang00}, \cite{BL18}, \cite[pp. 172]{BBLC18}, and in a recent publication to a larger, practically more relevant class of fair schedulers, called \emph{bandwidth-sharing policies} \cite{Bouillard21}.
Weighted round robin, from a different perspective, can be interpreted as a GPS emulation.
As a consequence, we could benefit from the profound GPS results also in the WRR analysis.

In this work, under the assumption of constrained cross-flows, we provide new strict leftover service curves for WRR and IWRR using the network calculus framework.
Essentially, these are based on mathematical proofs that WRR and IWRR are bandwidth-sharing policies.
The new service curves can lead to significantly better delay bounds compared to the state of the art.
The reason for the improvement is that the new service curves' \emph{traffic-aware residual rate} is larger than their traffic-agnostic counterpart (see again Figure~\ref{fig:linear-staircase}; of course, details follow below).

The rest of the paper is structured as follows:
We provide the necessary background on network calculus in Section~\ref{sec:NC-background}.
In Section~\ref{sec:SOTA-BWSP}, we present the state of the art on WRR and IWRR as well as a class of fair schedulers called bandwidth-sharing policies.
In Section~\ref{sec:WRR-new-SC}, we show that WRR and IWRR, respectively, are bandwidth-sharing policies and derive new leftover service curves.
We provide numerical evaluations in Section~\ref{sec:numerical-evaluation}.
Section~\ref{sec:conclusion} concludes the paper.

\section{Network Calculus Background}
\label{sec:NC-background}

In this section, we present the necessary network calculus definitions and theorems as we use them throughout this paper.

An \emph{arrival process} (or input function) $A(t)$ of a flow $ f $ cumulatively counts the number of work units that arrive at a server in the interval $ [0,t) $.
We define it as an element of $ \mathcal{F}, $ the set of all wide-sense increasing functions:
\[
\hspace{-2mm} 
\mathcal{F} = \left\{ f:\mathbb{R}^+\rightarrow\mathbb{R}^+\cup\left\{ +\infty\right\} \mid\forall 0\leq s\leq t: 0\leq f(s)\leq f(t)\right\}.
\]
Moreover, we use the shorthand notation $ A(s,t) \coloneqq A(t) - A(s). $
Similarly, we denote its according \emph{departure process} by $ D(t) \in \mathcal{F}. $
We assume a system to be \emph{causal}, i.e., no data are created at the system: $ A(t) \geq D(t) $ for all $ t \geq 0 $ and write $ D(s,t) $ for $ D(t) - D(s). $
Furthermore, we assume all systems to be lossless.

\begin{defn}[Virtual Delay]
	\label{def:virtual-delay}
	The \emph{virtual delay} of data arriving at a server at time $t$ is the time until this data would be served, assuming FIFO order of service,
	\begin{equation}
		d(t) \coloneqq \inf\left\{ \tau\geq 0: A(t)\leq D(t+\tau)\right\}.
		\label{eq:virtual-delay}
	\end{equation}
\end{defn}

In order to provide worst-case performance guarantees, we need upper bounds on arrivals and lower bounds on the service, respectively.
We start off by defining arrival curves.

\begin{defn}[Arrival Curve]
	\label{def:arrival-curve}
	Given an increasing function $ \alpha \in \mathcal{F} $.
	We say that $ \alpha $ is an \emph{arrival curve} for an arrival process $ A $ if for all $ 0\leq s \leq t $
	\begin{equation*}
		A(t) - A(s) \leq \alpha(t-s). 
		\label{ineq:arrival-curve}
	\end{equation*}
\end{defn}

The most important example is the \emph{token bucket} arrival curve $ \gamma_{r,b}(t) = b+r\cdot t $ for $ t \geq 0. $
The parameter $ r $ denotes the rate and $ b $ the burst tolerance \cite[p. 7]{LBT01}.

For the service we need to introduce different notions.
First, we define service curves as they are needed to provide performance bounds.
Then, we introduce a stronger notion of so-called strict service curves, that we use for a per-flow analysis in a multi-flow system.
Afterwards, we define variable capacity nodes which are mainly used as a technical assumption.

\begin{defn}[Service Curve]
	\label{def:service-curve}
	Consider an arrival process $ A $ traversing a server and its according departure process $D$.
	The server offers a \emph{(minimum) service curve} $ \beta $ to $ A $ if $ \beta \in \mathcal{F} $ and for all $ t \geq 0 $
	\begin{equation*}
		D(t) \geq \conv{A}{\beta}{t} = \inf_{0\leq s\leq t} \left\{A(t-s) + \beta(s)\right\}.
		\label{ineq:service-curve}
	\end{equation*}
\end{defn}

We define a \emph{backlogged period} such that $ D(\tau) < A(\tau)  $ for all $ \tau \in (s,t]. $ 

\begin{defn}[Strict Service Curve]
	\label{def:strict-service-curve}
	A server is said to offer a \emph{strict service curve} $\beta \in \mathcal{F}$ to a flow if, during any backlogged period $ (s,t] $,
	\begin{equation*}
		D(s,t) \geq \beta(t-s).
		\label{ineq:strict-service-curve}
	\end{equation*}
\end{defn}

The most important example of service curves we employ is the \emph{rate-latency} service curve $ \beta_{R,T}(t) \coloneqq R\cdot \posPart{t-T}, $ where $ \posPart{x} \coloneqq \max\{x, 0\} $ denotes the positive part.

We define the function $ C(s,t) = C(t) - C(s) $ as the cumulative service process of a server for $ 0 \leq s \leq t $.
\begin{defn}[Variable Capacity Node]
	\label{def:vcn}
	A server is a \emph{variable capacity node} (VCN) if it offers $ \beta \in \mathcal{F} $ such that for \emph{any} $ 0\leq s \leq t $
	\[
	C(s,t) \geq \beta(t-s).
	\]
\end{defn}
Typical examples such as constant-rate servers belong to this class.
The assumption of a VCN is very mild, as it has been proven that it is equivalent to a strict service curve if the asymptotic growth rate of $ \beta $ is finite \cite[pp. 222]{BBLC18}.

Under these basic concepts, we are able to derive \emph{tight} performance bounds on the delay:
\begin{thm}[Delay Bound]
	\label{thm:delay-bound}
	Assume an arrival process $ A $ traversing a server.
	Further, let the arrivals be constrained by arrival curve $ \alpha $ and let the system offer a service curve $ \beta. $
	The virtual delay $d(t)$ satisfies for all $t$
	\begin{equation*}
		d(t)\leq \sup_{t\geq0}\left\{ \inf\left\{ d \geq 0 \mid \alpha(t)\leq \beta(t+d)\right\}\right\} \eqqcolon h(\alpha, \beta),
		\label{ineq:delay-bound}
	\end{equation*}
	where $ h(\alpha, \beta) $ is the \emph{horizontal deviation} between $ \alpha $ and $ \beta. $
\end{thm}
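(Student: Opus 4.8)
The plan is to fix an arbitrary time $t \geq 0$ and establish the single inequality $A(t) \leq D\bigl(t + h(\alpha,\beta)\bigr)$; by Definition~\ref{def:virtual-delay} this at once gives $d(t) \leq h(\alpha,\beta)$, and since $t$ was arbitrary the theorem follows. We may assume $h \coloneqq h(\alpha,\beta) < \infty$, since otherwise the bound is vacuous.

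First I would invoke the service curve property (Definition~\ref{def:service-curve}) at the time $t + h$, obtaining
\[
	D(t+h) \geq \conv{A}{\beta}{t+h} = \inf_{0 \leq s \leq t+h} \bigl\{ A(t+h-s) + \beta(s) \bigr\}.
\]
Re-indexing the infimum by $u \coloneqq t+h-s \in [0, t+h]$ turns this into $D(t+h) \geq \inf_{0 \leq u \leq t+h}\{A(u) + \beta(t+h-u)\}$. The heart of the argument is then to show that each term in this infimum is at least $A(t)$, which I would do by a case split on the location of $u$ relative to $t$.

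If $t < u \leq t+h$, monotonicity of $A \in \mathcal{F}$ gives $A(u) \geq A(t)$ while $\beta(t+h-u) \geq \beta(0) \geq 0$, so the term is at least $A(t)$. If $0 \leq u \leq t$, the arrival curve property (Definition~\ref{def:arrival-curve}) yields $A(u) \geq A(t) - \alpha(t-u)$, so it suffices to verify $\beta(t+h-u) \geq \alpha(t-u)$. Setting $v \coloneqq t - u \geq 0$ and noting $t+h-u = v + h$, the definition of the horizontal deviation gives $h \geq \inf\{d \geq 0 : \alpha(v) \leq \beta(v+d)\}$, from which $\alpha(v) \leq \beta(v+h)$ follows. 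In either case the term is $\geq A(t)$, hence $D(t+h) \geq A(t)$, which is exactly what we wanted.

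The one step requiring care is the passage ``$h \geq \inf\{d : \alpha(v) \leq \beta(v+d)\} \Rightarrow \alpha(v) \leq \beta(v+h)$'': it is immediate when $h$ strictly exceeds this infimum (choose an admissible $d$ strictly between the infimum and $h$, and use that $\beta$ is increasing), while the boundary case $h = \inf\{\cdots\}$ needs either the infimum to be attained or a mild left-continuity assumption on $\beta$, both of which are standard in this framework. Apart from that, the proof is just bookkeeping with the $\min$-plus convolution and the wide-sense monotonicity of the functions involved, so I do not expect any genuine obstacle.
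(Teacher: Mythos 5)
The paper itself offers no proof of Theorem~\ref{thm:delay-bound}; it is quoted as standard network-calculus background (Theorem~1.4.2 in \cite{LBT01}), so your argument has to be judged on its own merits. Structurally it is the standard direct proof and is sound: reduce the claim to $A(t)\leq D(t+h)$, expand the min-plus convolution at $t+h$, split the infimum at $u=t$, and use monotonicity of $A$ on one side and the arrival-curve bound together with $\alpha(v)\leq\beta(v+h)$ on the other.

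The one genuine defect sits exactly in the step you flag, namely in how you propose to close it. From $h\geq d^{*}(v)\coloneqq\inf\{d\geq0:\alpha(v)\leq\beta(v+d)\}$ one indeed cannot conclude $\alpha(v)\leq\beta(v+h)$ when $h=d^{*}(v)$ and the infimum is not attained; but your remedy, ``a mild left-continuity assumption on $\beta$,'' points in the wrong direction. Because $\{d\geq0:\alpha(v)\leq\beta(v+d)\}$ is an up-set, attainment of its infimum is equivalent to \emph{right}-continuity of $\beta$ at $v+d^{*}(v)$; left-continuity is precisely the bad case (take $\beta$ constant up to and including $v+d^{*}(v)$ with an upward jump just after it, and the conclusion fails). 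Fortunately no continuity hypothesis is needed at all, because $d(t)$ is itself defined as an infimum: for arbitrary $\epsilon>0$ run your argument with $h+\epsilon$ in place of $h$. Then $h+\epsilon$ strictly exceeds every $d^{*}(v)$, so there is an admissible $d_{0}$ with $d^{*}(v)<d_{0}<h+\epsilon$, and monotonicity of $\beta$ gives $\alpha(v)\leq\beta(v+d_{0})\leq\beta(v+h+\epsilon)$. This yields $A(t)\leq D(t+h+\epsilon)$, hence $d(t)\leq h+\epsilon$ for every $\epsilon>0$, and therefore $d(t)\leq h$. (The textbook proof sidesteps the issue in a dual way, by showing $\tau\leq h(\alpha,\beta)$ for every $\tau<d(t)$.) With that one-line repair your proof is complete.
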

Tight means that we can create a sample path such that the delay is equal to its delay bound.

Using network calculus, we derive per-flow performance bounds.
Throughout the rest of this paper, if not stated otherwise, our flow of interest has the index $ i \in \left\lbrace 1, \dots, n \right\rbrace \eqqcolon \mathcal{N}  $
and we call the remaining $ n-1 $ flows cross-traffic.

\section{State-of-the-Art on (I)WRR and \\ Opening a Door for Improvement}
\label{sec:SOTA-BWSP}

In this section, we first explain the basic mechanics behind weighted round robin (WRR) and its interleaving variant, IWRR. 
Next, we present the state-of-the-art for the network calculus analysis of both, WRR and IWRR.
At the end of this section, we introduce a general class of fair schedulers called bandwidth-sharing policies and explain how we can leverage from this abstraction.

\subsection{Basics on (I)WRR}

\begin{algorithm}[t]
	\caption{Weighted Round Robin \cite[p. 200]{BBLC18}} \label{alg:WRR}
	\hspace*{\algorithmicindent} \textbf{Input} Integer weights $ w_1, \dots, w_n $
	\begin{algorithmic}[1]
		\While{True} \Comment{A round starts.}
		\For{$ i = 1 $ \textbf{to} $ n $}
		\State $ k \gets 1; $
		\While{\textbf{not} empty$ (i) $ \textbf{and} $ k \leq w_i $}
		\State \verb|send|(head($ i $));
		\State \verb|removeHead|($ i $);
		\State $ k \gets k + 1; $
		\EndWhile
		\EndFor
		\EndWhile \Comment{A round finishes.}
	\end{algorithmic}
\end{algorithm}

We start off with plain WRR:
Conceptually, the packets of a flow $f_i$ are queued in its own dedicated queue.
A flow receives one service opportunity in each \emph{round} (see Algorithm~\ref{alg:WRR}).
Each flow is assigned a weight $w_i$ that sets the maximum number of packets that can be served in a single round.
Note that, since flow $f_i$ sends all $w_i$ packets consecutively, this may result in a considerable output burstiness for flows with higher weights.
The packet sizes are not fixed making the performance analysis of networks with variable packet sizes significantly more challenging.

Mathematically, we assume a service guarantee for the aggregate of all flows in form of a strict service curve or a work-conserving server which is typical in the literature \cite{PG93, BBLC18, TLBB21}.
Obtaining a per-flow service guarantee enables us to provide respective per-flow performance guarantees.

The second WRR variant we consider, interleaved weighted round robin (IWRR) (see Algorithm~\ref{alg:IWRR}), mitigates plain WRR's typical burstiness by supplementing the rounds with \emph{cycles}, while maintaining the same complexity.
Instead of receiving service for a burst of $w_i$ packets in a round, flow $f_i$ can only send one packet per cycle like any other flow which has not yet exhausted its weight allocation in the current round. After $w_i$ rounds flow $f_i$ has to wait until other flows $f_j$ with $w_j > w_i$ finish off the round.

\subsection{Network calculus analysis of (I)WRR}

It was shown in the literature that if $ \beta $ is assumed to be a constant-rate server using WRR, then rate-latency leftover service curves $ \beta_{R,T} $ can be derived.
Several publications determined the rate term $ R $ and latency term $ T $ under WRR, such as \cite{QLD09, GDR11, SLSF18} (for a detailed discussion, see \cite{TLBB21}).
Yet, rate-latency service curves do not yield tight results for WRR since each packet is only served with the traffic-agnostic residual rate \cite{TLBB21}.
Taking into account that a packet is transmitted at full link speed leads to a more precise model and consequently better service guarantees.
Such a leftover service curve, which is obviously not a rate-latency function anymore, is elegantly derived in \cite{BBLC18} for WRR (the improvement is depicted in Figure~\ref{fig:linear-staircase}).
We only state the two leftover service curves that do not require specific knowledge 
about possible packet sequences (so-called packet curves).

\begin{algorithm}[t]
	\caption{Interleaved Weighted Round Robin \cite{TLBB21}} \label{alg:IWRR}
	\hspace*{\algorithmicindent} \textbf{Input} Integer weights $ w_1, \dots, w_n $
	\begin{algorithmic}[1]
		\State $ w_{\max} \gets \max \left\{w_1, \dots, w_n\right\} $
		\While{True} \Comment{A round starts.}
		\For{$ C = 1 $ \textbf{to} $ w_{\max} $} \Comment{A cycle starts.}
		\For{$ i = 1 $ \textbf{to} $ n $}
		\If{\textbf{not} empty$ (i) $ \textbf{and} $ C \leq w_i $}
		\State \verb|send|(head($ i $));
		\State \verb|removeHead|($ i $);
		\EndIf
		\EndFor
		\EndFor \Comment{A cycle finishes.}
		\EndWhile \Comment{A round finishes.}
	\end{algorithmic}
\end{algorithm}

\begin{thm}[Strict Leftover Service Curves for WRR]
	\label{thm:WRR-SSC-BBLC18} 
	Assume $n$ flows arriving at a server
	performing weighted round robin (WRR) with weights $w_{1},\dots,w_{n}$.
	Let this server offer a strict service curve $\beta$ to these $n$
	flows. 
	We define $q_{i}\coloneqq w_{i}l_{i}^{\min}$ and $Q_{i}\coloneqq\sum_{j\neq i}w_{j}l_{j}^{\max}$.
	\begin{enumerate}[label=\arabic*.]
		\item Then, 
		\begin{equation*}
			\beta^i(t) \coloneqq \gamma_i'\left(\posPart{\beta(t)-Q_{i}}\right),
		\end{equation*}
		is a strict service curve for flow $f_{i}$,
		where we define
		\begin{align*}
			\gamma_i'(t) & \coloneqq \conv{\beta_{1,0}}{\nu_{q_{i},q_{i}+Q_{i}}}{t},
		\end{align*}
		the stair function
		\begin{equation}
			\nu_{h, P}(t)\coloneqq h\left\lceil \frac{t}{P}\right\rceil \quad\text{for }t\geq0,
			\label{eq:stair-function}
		\end{equation}
		and $\beta_{1,0}$ is a constant-rate function with slope 1.
		
		\item Moreover, 
		\begin{align*}
			\beta^{i}(t) & \coloneqq \frac{w_{i}l_{i}^{\min}}{w_{i}l_{i}^{\min}+\sum_{j\neq i}w_{j}l_{j}^{\max}}\posPart{\beta(t)-\sum_{j\neq i}w_{j}l_{j}^{\max}}\\
			&= \frac{q_{i}}{q_{i}+Q_{i}} \posPart{\beta(t)-Q_{i}}
		\end{align*}
		is a strict service curve for flow $f_{i}$.
		If $ \beta $ is a constant-rate server, then the residual service curve $ \beta^i $ is a rate-latency service curve.
	\end{enumerate}
\end{thm}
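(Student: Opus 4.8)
The plan is to treat the two items in order, with item~1 carrying the real work and item~2 following by a short comparison argument. For item~1, I would fix an arbitrary backlogged period $(s,t]$ of $f_i$ and lower-bound the service $D_i(s,t)$ received by $f_i$. First, since $D_i(\tau)<A_i(\tau)$ forces $D(\tau)<A(\tau)$ for the aggregate, $(s,t]$ is also a backlogged period of the aggregate, so strictness of $\beta$ gives $D(s,t)\geq\beta(t-s)$, where $D=\sum_j D_j$ and $D(s,t)=D_i(s,t)+\sum_{j\neq i}D_j(s,t)$. The combinatorial core is the round structure of Algorithm~\ref{alg:WRR}: between two consecutive service opportunities of $f_i$ — and before the first one inside the period — each cross-flow $f_j$ is served at most once, i.e.\ at most $w_j$ packets of size at most $l_j^{\max}$, so the cross-traffic served in any such ``gap'' is at most $\sum_{j\neq i}w_j l_j^{\max}=Q_i$; conversely, every service opportunity of $f_i$ that both starts and finishes inside $(s,t]$ delivers its full quota of $w_i$ packets (because $f_i$ stays backlogged), hence at least $q_i=w_i l_i^{\min}$ units, while partial opportunities at the two endpoints contribute nonnegatively and create no extra gaps. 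Writing $m$ for the number of completed $f_i$-opportunities in $(s,t]$, these facts give $D_i(s,t)\geq m q_i$ and $\sum_{j\neq i}D_j(s,t)\leq(m+1)Q_i$ (at most $m+1$ gaps overlap $(s,t]$), hence $D_i(s,t)\geq\max\bigl(m q_i,\,D(s,t)-(m+1)Q_i\bigr)\geq\min_{k\in\mathbb{N}_0}\max\bigl(k q_i,\,D(s,t)-(k+1)Q_i\bigr)$.

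It then remains to identify this minimum with $\beta^i$. A direct evaluation of the min-plus convolution yields, with $P\coloneqq q_i+Q_i$, that $\gamma_i'(u)=(\beta_{1,0}\otimes\nu_{q_i,q_i+Q_i})(u)=\min\bigl(\lceil u/P\rceil q_i,\ u-\lfloor u/P\rfloor Q_i\bigr)$, and writing $u=kP+\rho$ with $0\le\rho<P$ this equals $k q_i+\min(\rho,q_i)$, which one checks coincides with $\min_{k\in\mathbb{N}_0}\max(k q_i,u-k Q_i)$. Substituting $u=\posPart{D(s,t)-Q_i}$ (handling $D(s,t)\le Q_i$ and $D(s,t)>Q_i$ separately) turns the bound above into $D_i(s,t)\geq\gamma_i'\bigl(\posPart{D(s,t)-Q_i}\bigr)$, and since $\gamma_i'\in\mathcal{F}$ and $\posPart{\cdot}$ are nondecreasing, combining with $D(s,t)\geq\beta(t-s)$ gives $D_i(s,t)\geq\gamma_i'\bigl(\posPart{\beta(t-s)-Q_i}\bigr)=\beta^i(t-s)$, which is item~1. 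I expect the main obstacle to be making the gap-counting argument fully rigorous in the presence of variable packet sizes and endpoints $s,t$ that fall inside service opportunities, and then matching the resulting combinatorial lower bound with the closed form of the convolution.

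For item~2 it suffices to establish the pointwise inequality $\gamma_i'(u)\geq\tfrac{q_i}{q_i+Q_i}u$ for all $u\geq0$: with $u=kP+\rho$, $0\le\rho<P$, one has $\gamma_i'(u)=k q_i+\min(\rho,q_i)$ and $\tfrac{q_i}{P}u=k q_i+\tfrac{q_i}{P}\rho$, and $\min(\rho,q_i)\geq\tfrac{q_i}{P}\rho$ because $P\geq q_i$ and $\rho<P$ (geometrically, the secant through the corners $(kP,k q_i)$ stays below $\gamma_i'$, whose risers have slope $1>\tfrac{q_i}{P}$). Applying this with $u=\posPart{\beta(t)-Q_i}$ shows that the linear curve $\tfrac{q_i}{q_i+Q_i}\posPart{\beta(t)-Q_i}$ is dominated by the item~1 curve $\gamma_i'\bigl(\posPart{\beta(t)-Q_i}\bigr)$; since it also lies in $\mathcal{F}$, it is itself a strict service curve for $f_i$ (any nondecreasing function below a strict service curve is again one). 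Finally, if $\beta(t)=C t$ then $\tfrac{q_i}{q_i+Q_i}\posPart{C t-Q_i}=\tfrac{q_i C}{q_i+Q_i}\posPart{t-Q_i/C}=\beta_{R,T}(t)$ with $R=\tfrac{q_i C}{q_i+Q_i}$ and $T=Q_i/C$, a rate-latency curve.
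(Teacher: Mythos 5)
Your proposal is correct, and it follows essentially the same route as the source the paper cites for this theorem (\cite[p.~200]{BBLC18}): the paper itself states Theorem~\ref{thm:WRR-SSC-BBLC18} without proof, but the combinatorial core you use --- $D_i(s,t)\geq p\,w_i l_i^{\min}$ per completed round and $D_j(s,t)\leq(p+1)w_j l_j^{\max}$ for each cross-flow --- is exactly the pair of inequalities the paper reuses in its proof of Theorem~\ref{thm:WRR-BWSP}. Your closed-form evaluation of $\beta_{1,0}\otimes\nu_{q_i,q_i+Q_i}$ and the pointwise domination argument for item~2 are both sound.
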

Note that for packets of constant size the delay bound of Theorem~\ref{thm:WRR-SSC-BBLC18} has been shown to be tight \cite{TLBB21}, i.e., a sample path attaining the delay bound was constructed.
To be precise, only the first part of the theorem, where each packet receives full link speed during transmission, is actually tight since the second part only provides the traffic-agnostic residual rate.
For a constant-rate server with rate $ C, $ this rate is equal to $ \frac{q_{i}}{q_{i}+Q_{i}} C. $

For IWRR, a leftover service curve has been derived in \cite{TLBB21}.
Not only do the authors show that exploiting the interleaving in the analysis can significantly improve delay bounds, they also prove that their bound is tight.
We only state the leftover service curve for IWRR.

\begin{thm}[Strict Leftover Service Curves for IWRR]
	\label{thm:IWRR-SSC-TLBB21} 
	Assume $n$ flows arriving at a server
	performing interleaved weighted round robin (IWRR) with weights $w_{1},\dots,w_{n}$.
	Let this server offer a strict service curve $\beta$ to these $n$
	flows. 
	Then, 
	\begin{equation*}
		\beta^i(t) \coloneqq \gamma_i''\left(\beta(t)\right),
	\end{equation*}
	is a strict service curve for flow $f_{i}$,
	where
	\begin{align}
		\gamma_i''(t)  \coloneqq & \conv{\beta_{1,0}}{U_i}{t}, \nonumber\\
		U_i (t)  \coloneqq & \sum_{k=0}^{w_i - 1} \nu_{l_i^{\min},L_\mathrm{tot}}\left(\posPart{t - \psi\left(k l_i^{\min}\right)}\right) \nonumber\\
		L_\mathrm{tot}  \coloneqq & q_i + Q_i \nonumber\\
		\psi_i(x)  \coloneqq & x + \sum_{j\neq i} \Psi_{ij} \left(\left\lfloor \frac{x}{l_{i}^{\min}}\right\rfloor \right) l_j^{\max} \nonumber\\
		\Psi_{ij}(p)  \coloneqq & \left\lfloor \frac{p}{w_{i}}\right\rfloor w_{j}+\posPart{w_{j}-w_{i}} \label{eq:phi-ij}\\
		&+\min\left\{ \left(p\mod w_{i}\right)+1,w_{j}\right\} \nonumber
	\end{align}
	and the stair function $ \nu_{h,P}(t) $  is defined in Equation~\eqref{eq:stair-function} as well as $q_{i} = w_{i}l_{i}^{\min}$ and $Q_{i} = \sum_{j\neq i}w_{j}l_{j}^{\max}$ again.
\end{thm}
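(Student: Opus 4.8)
The plan is to fix an arbitrary backlogged period $(s,t]$ of the flow of interest $f_i$ and to prove $D_i(s,t)\ge\gamma_i''\!\left(D(s,t)\right)$, where $D_i$ and $D$ denote the flow-$i$ and the aggregate departure processes. This suffices: since $f_i$ is backlogged on $(s,t]$, so is the aggregate, hence $D(s,t)\ge\beta(t-s)$ by strictness of $\beta$; and, $\gamma_i''=\beta_{1,0}\otimes U_i$ being wide-sense increasing, it follows that $D_i(s,t)\ge\gamma_i''(\beta(t-s))=\beta^i(t-s)$, which is exactly the asserted strict-service-curve property. Throughout I use only that IWRR (Algorithm~\ref{alg:IWRR}) serves packets one at a time in a fixed periodic order determined by the weights and flow indices, and that on $(s,t]$ the queue of $f_i$ is never empty, so $f_i$ is served whenever its turn comes up.

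The heart of the argument — and what I expect to be the main obstacle — is a counting lemma for Algorithm~\ref{alg:IWRR}: during $(s,t]$, at any time $u$ at which $f_i$ has completed exactly $p$ packets, at most $\Psi_{ij}(p)$ packets of each cross-flow $f_j$ have left on $(s,u]$, with $\Psi_{ij}$ as in \eqref{eq:phi-ij}. This is established by a case analysis of the cycle/round mechanics: within a round a flow of weight $w_j$ is served exactly once in each of the cycles $1,\dots,w_j$, in the fixed order $1,\dots,n$ within a cycle, so any run of $w_i$ consecutive $f_i$-packets straddles at most the tail of one IWRR-round and the head of the next and therefore contains at most $w_j$ packets of $f_j$ (the $\lfloor p/w_i\rfloor\,w_j$ term), while the worst-case phase — $f_i$ becoming backlogged just after its own turn in cycle $w_i$ of a round — additionally costs $\posPart{w_j-w_i}$ packets of $f_j$ from the leftover cycles $w_i{+}1,\dots,w_j$ of that round and at most one further packet from the cycle in which $f_i$ is currently served (the $\posPart{w_j-w_i}$ and $\min\{(p\bmod w_i)+1,\,w_j\}$ terms). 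Since every $f_i$-packet carries at least $l_i^{\min}$ and every $f_j$-packet at most $l_j^{\max}$, and $\psi_i\!\left(p\,l_i^{\min}\right)=p\,l_i^{\min}+\sum_{j\neq i}\Psi_{ij}(p)\,l_j^{\max}$, the lemma implies that at any $u\in(s,t]$ with $p$ the number of $f_i$-packets completed by $u$ the cross-traffic served on $(s,u]$ is at most $\psi_i(p\,l_i^{\min})-p\,l_i^{\min}$ (here $f_i$'s $(p{+}1)$-th packet has not yet completed and, being an $f_i$-packet, carries no cross-traffic during its own transmission); subtracting this from $D(s,u)=D_i(s,u)+\sum_{j\neq i}D_j(s,u)$ yields $D_i(s,u)\ge\max\{\,p\,l_i^{\min},\ D(s,u)-\psi_i(p\,l_i^{\min})+p\,l_i^{\min}\,\}$.

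To finish, write $v=D(s,u)$ and $\gamma_i''(v)=\inf_{0\le y\le v}\{v-y+U_i(y)\}$, and evaluate the infimand at $y=\min\{\psi_i(p\,l_i^{\min}),\,v\}$. Using monotonicity of $U_i$ together with the identity $U_i\!\left(\psi_i(p\,l_i^{\min})\right)=p\,l_i^{\min}$ — a telescoping computation from the definitions of $U_i$, $\nu_{l_i^{\min},L_{\mathrm{tot}}}$ and $\psi_i$, which rests on $\psi_i\!\left((x{+}w_i)\,l_i^{\min}\right)-\psi_i\!\left(x\,l_i^{\min}\right)=L_{\mathrm{tot}}$ and $0<\psi_i\!\left((x{+}1)\,l_i^{\min}\right)-\psi_i\!\left(x\,l_i^{\min}\right)\le L_{\mathrm{tot}}$ — one obtains $\gamma_i''(v)\le v-\psi_i(p\,l_i^{\min})+p\,l_i^{\min}$ when $\psi_i(p\,l_i^{\min})\le v$, and $\gamma_i''(v)\le U_i(v)\le U_i\!\left(\psi_i(p\,l_i^{\min})\right)=p\,l_i^{\min}$ otherwise; in both cases the right-hand side is at most $D_i(s,u)$ by the previous bound, so $D_i(s,u)\ge\gamma_i''(D(s,u))$. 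Specializing to $u=t$ and recalling $D(s,t)\ge\beta(t-s)$ completes the proof. Apart from the counting lemma itself, the remaining work is routine: the telescoping identity and its boundary cases $p<w_i$ and $w_i\mid p$.
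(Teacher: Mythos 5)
The paper itself states Theorem~\ref{thm:IWRR-SSC-TLBB21} without proof, quoting it from \cite{TLBB21}; your proposal correctly reconstructs that source's argument in its essential architecture — the reduction to showing $D_i(s,t)\ge\gamma_i''(D(s,t))$ on a backlogged period of $f_i$ (plus strictness of $\beta$ for the aggregate and monotonicity of $\gamma_i''$), the counting bound $D_j(s,u)\le\Psi_{ij}(p)\,l_j^{\max}$ (precisely the Lemmas~4 and~6 of \cite{TLBB21} that this paper itself invokes in proving Theorem~\ref{thm:IWRR-BWSP}), and the pseudo-inverse identity $U_i\!\left(\psi_i(p\,l_i^{\min})\right)=p\,l_i^{\min}$ resting on $\psi_i((x+w_i)l_i^{\min})-\psi_i(x\,l_i^{\min})=L_{\mathrm{tot}}$ and $0<\psi_i((x+1)l_i^{\min})-\psi_i(x\,l_i^{\min})\le L_{\mathrm{tot}}$. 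The one part you leave as a sketch is the cycle-by-cycle case analysis behind the counting lemma itself (including the accounting for a cross-flow packet still in transmission at time $u$), which is exactly where the technical work of \cite{TLBB21} is concentrated; the remainder is correct as written.
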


\subsection{The case of constrained cross-traffic}
\label{subsec:case-cont-cross-traffic}

Both theorems, Theorem~\ref{thm:WRR-SSC-BBLC18} and \ref{thm:IWRR-SSC-TLBB21}, do not make any assumptions on the cross-traffic.
While this can be seen as a strength, in our work, we argue that by assuming cross-flows to be constrained, we open the door for improvement.
This can lead to significantly reduced delay bounds (see our numerical evaluation in Section~\ref{sec:numerical-evaluation}).
This reduction comes despite the fact that we do not closely model packet transmission at full link speed.
However, as we see in the next section, since we consider the maximum of shifted linear functions, we can often still obtain more than just the traffic-agnostic residual rate.
The central notion we employ to take knowledge on cross-flows into account is the bandwidth-sharing policy. 
We start off with its definition.

\begin{defn}[Bandwidth-Sharing Policy \cite{Bouillard21}] 
	A server has a \emph{bandwidth-sharing policy} if there exist positive weights $\phi_{i} > 0,i=1,\dots,n$ and nonnegative number $H_{ij} \geq 0, \, 1\leq i,j\leq n$ such that for a backlogged period $(s,t]$ of flow $f_{i}$ it holds that
	\begin{equation}
		\frac{D_{i}(s,t)}{\phi_{i}}\geq\frac{\posPart{D_{j}(s,t)-H_{ij}}}{\phi_{j}}, \quad \text{for all } j \neq i.
	\end{equation}
\end{defn}
Note that the bandwidth-sharing policy can be seen as a generalization of the resource allocation under GPS, where the $ H_{ij} $ would be 0.
We therefore interpret it as a penalty term that is a consequence of emulating the ideal fluid fair sharing of GPS by a real packet scheduler implementation.
For example, deficit round robin (DRR) has been show to be a bandwidth-sharing policy \cite{Bouillard21}.
For this class of schedulers, a profound result is given in the literature.
It was initially derived for GPS and then extended to schedulers which realize a bandwidth-sharing policy.

\begin{thm}[Strict Leftover Service Curves for Bandwidth-Sharing Policies]
	\label{thm:BWSP-SSC} 
	Assume $n$ flows arriving at a server
	with a bandwidth-sharing policy with positive weights $\phi_{i} > 0,i=1,\dots,n$ and a nonnegative penalty term $H_{ij} \geq 0, \, 1\leq i,j\leq n$.
	Let this server be a VCN that offers a convex $\beta$ to these $n$ flows. 
	Let $\mathcal{N} = \left\{ 1,\dots,n\right\} $ and assume that
	each flow $f_{i}$ is constrained by a concave arrival curve $\alpha_{i},i=1,\dots,n.$
	Then, 
	\begin{equation}
		\beta^{i}(t)= \max_{i\in M\subset\mathcal{N}}  \frac{\phi_{i}}{\sum_{k\in M}\phi_{k}}\posPart{\beta(t)-\sum_{k\notin M}\alpha_{k}(t)-\sum_{k\in M}H_{ik}}
	\end{equation}
	is a strict service curve for flow $f_{i}$.
\end{thm}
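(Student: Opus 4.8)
The plan is to establish the bound ``one set at a time'': for every $M$ with $i\in M\subseteq\mathcal{N}$ I would show that
\[
\beta^i_M(t)\coloneqq\frac{\phi_i}{\sum_{k\in M}\phi_k}\posPart{\beta(t)-\sum_{k\notin M}\alpha_k(t)-\sum_{k\in M}H_{ik}}
\]
is itself a strict service curve for $f_i$. The pointwise maximum of strict service curves is again a strict service curve, since every backlogged period must satisfy all of them simultaneously, so $\beta^i=\max_{i\in M\subseteq\mathcal{N}}\beta^i_M$ then inherits the property; the hypotheses that $\beta$ be convex and the $\alpha_k$ concave already guarantee, en passant, that each $\beta^i_M$ is wide-sense increasing, i.e.\ lies in $\mathcal{F}$. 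So fix such an $M$ and an arbitrary backlogged period $(s,t]$ of $f_i$.

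The skeleton is then short. Because $f_i$ is backlogged on $(s,t]$, so is the aggregate; as the server is a VCN offering $\beta$, all of the capacity is spent during $(s,t]$ and hence $\sum_{k=1}^{n}D_k(s,t)=C(s,t)\ge\beta(t-s)$. For $k\in M\setminus\{i\}$ the bandwidth-sharing inequality, applied to the backlogged period $(s,t]$ of $f_i$, gives $D_k(s,t)\le\frac{\phi_k}{\phi_i}D_i(s,t)+H_{ik}$, and this also holds trivially for $k=i$ since $H_{ii}\ge0$; summing over $M$ yields $\sum_{k\in M}D_k(s,t)\le\frac{\sum_{k\in M}\phi_k}{\phi_i}D_i(s,t)+\sum_{k\in M}H_{ik}$. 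Provided one also has the cross-flow estimate $D_k(s,t)\le\alpha_k(t-s)$ for every $k\notin M$, combining the three bounds gives
\[
\beta(t-s)\le\frac{\sum_{k\in M}\phi_k}{\phi_i}\,D_i(s,t)+\sum_{k\in M}H_{ik}+\sum_{k\notin M}\alpha_k(t-s),
\]
and solving for $D_i(s,t)$, together with $D_i(s,t)\ge0$, produces exactly $D_i(s,t)\ge\beta^i_M(t-s)$, as required.

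The real work, and the step I expect to be the main obstacle, is securing $D_k(s,t)\le\alpha_k(t-s)$ for the flows $k\notin M$. This is not immediate: the left endpoint $s$ of a backlogged period of $f_i$ need not be an instant at which $f_k$ is empty, so $f_k$ may carry a backlog into $(s,t]$ and, at first sight, emit more than $\alpha_k(t-s)$ there. My plan is to split into two cases. If $f_k$ remains backlogged throughout $(s,t]$, the bandwidth-sharing inequality also ties $D_k(s,t)$ to $D_i(s,t)$, so $f_k$ can be absorbed into $M$; one must then check that remaining backlogged on all of $(s,t]$ restricts $t-s$ to a range on which the target bound $\beta^i_M(t-s)$ still follows. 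Otherwise $f_k$ has an empty instant inside $(s,t]$, and I would anchor the estimate at the start of the busy period of the server that contains $t$ — where all flows, $f_k$ in particular, are empty — bound the part of $D_k$ after $f_k$'s last empty instant directly by causality and its arrival curve, and re-assemble the contributions over the resulting sub-intervals using the subadditivity of the concave $\alpha_k$ and the superadditivity of the convex $\beta$; this is precisely where the two shape assumptions are used. Once this cross-flow estimate is in hand, the algebra of the preceding paragraph is routine; the interval bookkeeping and the legitimacy of the ``absorb $f_k$ into $M$'' step are the delicate parts.
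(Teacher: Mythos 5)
The paper does not actually prove this theorem: its ``proof'' is the citation to Theorem~1 of \cite{Bouillard21}, so your attempt can only be measured against that external argument. Your skeleton is the standard one --- fix $M$, prove each $\beta^i_M$ separately and take the maximum, then combine $\sum_k D_k(s,t)\ge\beta(t-s)$ with $D_k(s,t)\le\frac{\phi_k}{\phi_i}D_i(s,t)+H_{ik}$ for $k\in M$ and an arrival-curve bound for $k\notin M$ --- and you have correctly located the crux: $D_k(s,t)\le\alpha_k(t-s)$ fails when $f_k$ carries backlog into $s$. Up to that point the algebra is fine.

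The way you propose to close that gap, however, does not work. First, ``absorbing $f_k$ into $M$'' when $f_k$ stays backlogged proves $D_i(s,t)\ge\beta^i_{M\cup\{k\}}(t-s)$, a statement about a \emph{different} curve. Since the theorem asserts the maximum over \emph{all} $M\ni i$ is a strict service curve, every individual $\beta^i_M$ must hold on every backlogged period (the maximum of strict service curves is strict \emph{if and only if} each one is), and neither $\beta^i_M$ nor $\beta^i_{M\cup\{k\}}$ dominates the other in general: the denominator grows by $\phi_k$ and $H_{ik}$ is added while $\alpha_k$ is dropped. Nor does persistent backlogging of $f_k$ restrict $t-s$ in any way --- $f_k$ may remain backlogged forever. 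Second, the ``anchor at the busy-period start $u$ and reassemble via super-/subadditivity'' route, carried out in the natural way, yields
\[
D_i(s,t)\ \ge\ \frac{\phi_i}{\sum_{k\in M}\phi_k}\Bigl[\beta(t-s)-\sum_{k\notin M}\alpha_k(t-s)-\sum_{k\in M}H_{ik}+\bigl(\beta(s-u)-\textstyle\sum_{k\in\mathcal{N}}\alpha_k(s-u)\bigr)\Bigr],
\]
and the parenthesized deficit is $\le 0$ because $\beta(s-u)\le C(u,s)\le\sum_k A_k(u,s)\le\sum_k\alpha_k(s-u)$; you land strictly short of the target. Causality plus shape assumptions cannot, by themselves, bound the pre-$s$ backlog that $f_k$ drains inside $(s,t]$ by $\alpha_k(t-s)$. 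The missing ingredient is to control the flows $k\notin M$ by the \emph{bandwidth-sharing inequality itself} on the sub-intervals of $(s,t]$ where they are still draining that backlog (while draining, $f_k$ is backlogged and hence cannot outpace $\phi_k D_i/\phi_i+H_{ik}$), reserving the arrival-curve bound for the portion after their last emptying instant. Your sketch names the sub-intervals but omits this mechanism, and it is precisely the step that makes the cited proof nontrivial.
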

\begin{proof}
	See Theorem~1 in \cite{Bouillard21}.
\end{proof}
The number of possible sets $M$ to optimize over, of course, potentially becomes very large for a high number of flows: we have $ 2^{\left| \mathcal{N} \right| - 1 } $ subsets of $ \mathcal{N} $ containing $ i $. Yet, we point out that any selection of $M$ provides a strict leftover service curve. 
That means, we can come up with heuristics that avoid a combinatorial explosion by trading efficiency for the accuracy of the calculated bounds.

\section{New Strict Service Curves for (Interleaved) Weighted Round Robin}
\label{sec:WRR-new-SC}

In this section, we show that WRR as well as IWRR are bandwidth-sharing policies.
This insight has the direct consequence of providing new leftover service curves which take into account arrival constraints on the cross-flows.

\begin{thm}[WRR is a Bandwidth-Sharing Policy]
	\label{thm:WRR-BWSP}
	Assume $n$ flows arriving at a server performing weighted round robin (WRR) with weights $w_{1},\dots,w_{n}$. Then, WRR is a bandwidth-sharing
	policy for flow $f_{i}$ with
	\begin{equation}
		\begin{aligned}\phi_{i}' &= w_{i} l_{i}^{\min} \eqqcolon q_i,\\
			\phi_{j}' & = w_{j} l_{j}^{\max} \eqqcolon q_j, \quad j \neq i
		\end{aligned}
		\label{eq:w-prime}
	\end{equation}
	and 
	\[
	H_{ij}' = w_{j}l_{j}^{\max}\IndicFun{i\neq j} = q_j\IndicFun{i\neq j},
	\]
	where $ \IndicFun{i\neq j} = 1 $ for $ i \neq j $ and $ 0 $, else.
	In other words, we have 
	\begin{equation}
		\frac{D_{i}(s,t)}{\phi_{i}'}\geq\frac{\posPart{D_{j}(s,t)-w_{j} l_{j}^{\max}\IndicFun{i\neq j}}}{\phi_{j}'}\label{ineq:WRR-bandwidth-sharing}
	\end{equation}
	for any $(s,t]$ such that flow $f_{i}$ is backlogged.
\end{thm}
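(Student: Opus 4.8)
The plan is to verify the defining inequality~\eqref{ineq:WRR-bandwidth-sharing} directly from the round structure of Algorithm~\ref{alg:WRR}. For $j=i$ it is immediate, since $H_{ii}'=q_i\IndicFun{i\neq i}=0$ and the inequality reads $D_i(s,t)/\phi_i'\geq D_i(s,t)/\phi_i'$. So I fix $j\neq i$ and a period $(s,t]$ on which $f_i$ is backlogged, and it remains to prove $\posPart{D_j(s,t)-q_j}/q_j\leq D_i(s,t)/q_i$.

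First I would introduce $N\geq 0$, the number of \emph{complete} service opportunities of $f_i$ lying entirely inside $(s,t]$, and bound $D_i(s,t)$ from below. Since $f_i$ is backlogged throughout $(s,t]$, whenever its turn comes up its queue is nonempty (if it emptied in the middle of an opportunity we would have $D_i=A_i$ there, contradicting backlog), so $f_i$ sends exactly $w_i$ whole packets, each of size at least $l_i^{\min}$; thus each such opportunity contributes at least $q_i=w_i l_i^{\min}$, and a partial $f_i$-opportunity overlapping $s$ or $t$ only adds more. Hence $D_i(s,t)\geq N q_i$, i.e.\ $D_i(s,t)/q_i\geq N$.

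Next I would bound $D_j(s,t)$ from above using the round-robin invariant that every flow gets exactly one opportunity per round: between two consecutive service opportunities of $f_i$ there sits exactly one opportunity of $f_j$ (in the tail of the current round if $j>i$, in the head of the next round if $j<i$), and in a single opportunity $f_j$ sends at most $w_j$ packets of size at most $l_j^{\max}$, i.e.\ at most $q_j=w_j l_j^{\max}$. Therefore $(s,t]$ is covered by at most $N+1$ such $f_j$-windows — the $N-1$ gaps strictly between consecutive complete $f_i$-opportunities inside it, the part of a gap preceding the first, and the part of a gap following the last (and when $N=0$, the absence of any $f_i$-opportunity forces $(s,t]$ to lie inside a single gap, giving the same count). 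Hence $D_j(s,t)\leq (N+1)\,q_j$, so $D_j(s,t)-q_j\leq N q_j$; dividing by $q_j$ and using $N\geq 0$ yields $\posPart{D_j(s,t)-q_j}/q_j\leq N\leq D_i(s,t)/q_i$, which is~\eqref{ineq:WRR-bandwidth-sharing}.

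The main obstacle I expect is the bookkeeping at the two endpoints of $(s,t]$: one must argue that a partial $f_i$-opportunity straddling $s$ or $t$ does not corrupt the count $N$ (it is simply not counted, which is safe for the lower bound on $D_i$) and that each $f_j$-window, whether truncated by $s$, by $t$, or by an intervening $f_i$-opportunity, still delivers at most $q_j$, as well as cleanly disposing of the degenerate case $N=0$. Once the round-robin invariant and the per-opportunity caps $q_i$ (from backlog) and $q_j$ (from the weight and packet-size bounds) are in place, the remainder is arithmetic.
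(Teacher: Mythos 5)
Your proposal is correct and follows essentially the same route as the paper's proof: both count the number $p$ (your $N$) of completed service opportunities of $f_i$ in the backlogged period, establish $D_i(s,t)\geq p\,q_i$ and $D_j(s,t)\leq (p+1)\,q_j$, and combine them. You simply chain the two bounds through $p$ directly, which is a cleaner piece of arithmetic than the paper's intermediate manipulations with $\lfloor D_i(s,t)/(w_i l_i^{\min})\rfloor$, but the underlying argument is identical.
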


\begin{proof}
	Following along the lines of \cite[pp. 201]{BBLC18}, we consider
	a backlogged period of $(s,t]$ of flow $f_{i}$ and let $p\in\mathbb{N}$
	denote the number of completed services of flow $f_{i}$ in the interval $(s,t].$
	Constructing the worst case, we have 
	\begin{equation}
		D_{i}(s,t)\geq pw_{i}l_{i}^{\min}.
		\label{ineq:WRR-Di-lower-bound}
	\end{equation}
	Moreover, this yields directly an upper bound for $p\in\mathbb{N}$:
	\begin{equation}
		p\leq\left\lfloor \frac{D_{i}(s,t)}{w_{i}l_{i}^{\min}}\right\rfloor .\label{ineq:WRR-p-upper-bound}
	\end{equation}
	On the other hand, in this interval, 
	\begin{equation}
		D_{j}(s,t)\leq(p+1)w_{j}l_{j}^{\max},\quad\forall j\neq i.\label{ineq:WWR-Dj-upper-bound}
	\end{equation}
	Summing the inequalities in \eqref{ineq:WRR-Di-lower-bound} and \eqref{ineq:WWR-Dj-upper-bound} yields
	\begin{align*}
		\frac{D_{i}(s,t)}{w_{i}}\overset{\eqref{ineq:WRR-Di-lower-bound},\eqref{ineq:WWR-Dj-upper-bound}}{\geq} & \frac{D_{j}(s,t)}{w_{j}}+pl_{i}^{\min}-(p+1)l_{j}^{\max}\\
		= & \frac{D_{j}(s,t)}{w_{j}}-p\left(l_{j}^{\max}-l_{i}^{\min}\right)-l_{j}^{\max}\\
		\overset{\eqref{ineq:WRR-p-upper-bound}}{\geq} & \frac{D_{j}(s,t)}{w_{j}}-\left\lfloor \frac{D_{i}(s,t)}{w_{i}l_{i}^{\min}}\right\rfloor \left(l_{j}^{\max}-l_{i}^{\min}\right)-l_{j}^{\max}\\
		\geq & \frac{D_{j}(s,t)}{w_{j}}-\frac{D_{i}(s,t)}{w_{i}}\cdot\frac{l_{j}^{\max}-l_{i}^{\min}}{l_{i}^{\min}}-l_{j}^{\max}.
	\end{align*}
	This is equivalent to
	\[
	\frac{D_{i}(s,t)}{w_{i} l_{i}^{\min}}\geq\frac{1}{l_{j}^{\max}}\left(\frac{D_{j}(s,t)}{w_{j}}-l_{j}^{\max}\right)=\frac{D_{j}(s,t) - w_{j} l_{j}^{\max}}{w_{j} l_{j}^{\max}}.
	\]
	Inserting $\phi_{i}',\phi_{j}'$ as in Equation~\eqref{eq:w-prime} and
	using that $D_{i}(s,t)\geq0$ for all $0\leq s\leq t$ finishes the
	proof.
\end{proof}

We interpret the assignment for $ \phi_i', i = 1, \dots,n $ in Equation~\eqref{eq:w-prime} as the weight corrected to the amount of data per round in the worst case from the perspective of the flow of interest.

\begin{cor}[Strict Leftover Service Curve for WRR]
	\label{cor:WRR-Vlad} Assume $n$ flows arriving at a server
	performing weighted round robin (WRR) with weights $w_{1},\dots,w_{n}$.
	Let this server be a VCN that offers a convex $\beta$ to these $n$ flows. 
	Let $\mathcal{N}\coloneqq\left\{ 1,\dots,n\right\} $ and assume that each flow $f_{i}$ is constrained by a concave arrival curve $\alpha_{i},i=1,\dots,n.$
	We define
	\begin{align*}
		Q_{M}' & \coloneqq\sum_{k\in M\setminus\left\{ i\right\} }w_{k}l_{k}^{\max}.
	\end{align*}
	Then, 
	\begin{equation}
		\begin{aligned}
			\beta^{i}(t) & =\max_{i\in M\subset\mathcal{N}}  \frac{q_{i}}{q_{i}+Q_{M}'} \posPart{\beta(t)-\sum_{k\notin M}\alpha_{k}(t)-Q_{M}'}
		\end{aligned}
		\label{eq:WRR-beta-i-M}
	\end{equation}
	is a strict service curve for flow $f_{i}$.
	In the following, we call the leftover service curve $ \beta^{i} $ \emph{WRR M}.
\end{cor}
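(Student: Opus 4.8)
The plan is to derive the corollary directly from the two structural results already in hand: Theorem~\ref{thm:WRR-BWSP}, which shows that WRR is a bandwidth-sharing policy, and Theorem~\ref{thm:BWSP-SSC}, which turns any bandwidth-sharing policy into a family of strict leftover service curves. Concretely, I would fix the flow of interest $f_i$, invoke Theorem~\ref{thm:WRR-BWSP} to obtain the weights $\phi_i' = q_i$, $\phi_j' = q_j$ for $j\neq i$, and the penalty terms $H_{ij}' = q_j\IndicFun{i\neq j}$, and then feed exactly these parameters into Theorem~\ref{thm:BWSP-SSC}.

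Before applying Theorem~\ref{thm:BWSP-SSC}, I would verify that its hypotheses hold. The server is assumed to be a VCN offering a convex $\beta$, and each $f_k$ is constrained by a concave arrival curve $\alpha_k$ — these are precisely the standing assumptions of the corollary. The one subtlety worth a remark is that the weights and penalties produced by Theorem~\ref{thm:WRR-BWSP} depend on which flow is designated as the flow of interest (through the use of $l_i^{\min}$ versus $l_j^{\max}$), so they are not a single universal set of weights. This is harmless, because Theorem~\ref{thm:BWSP-SSC} only uses the bandwidth-sharing inequality along backlogged periods of the single flow $f_i$ under consideration; we therefore apply it \emph{from the perspective of $f_i$} with the $i$-specific parameters.

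It then remains to simplify the generic expression
\[
\beta^{i}(t)= \max_{i\in M\subset\mathcal{N}}  \frac{\phi_{i}'}{\sum_{k\in M}\phi_{k}'}\posPart{\beta(t)-\sum_{k\notin M}\alpha_{k}(t)-\sum_{k\in M}H_{ik}'}
\]
under our parameter choice. For any $M$ with $i\in M$ we have $\sum_{k\in M}\phi_{k}' = q_i + \sum_{k\in M\setminus\{i\}} q_k = q_i + Q_M'$, and since $H_{ii}' = 0$ we also have $\sum_{k\in M}H_{ik}' = \sum_{k\in M\setminus\{i\}} q_k = Q_M'$. Substituting these two identities into the formula above yields exactly Equation~\eqref{eq:WRR-beta-i-M}, which proves the claim.

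I do not expect a genuine obstacle here; the corollary is essentially a specialization, and the only care needed is the bookkeeping in the last paragraph together with the observation about the flow-dependence of the bandwidth-sharing parameters. As a sanity check one can note that the choice $M = \mathcal{N}$ gives $\sum_{k\notin M}\alpha_k(t) = 0$ and $Q_M' = Q_i$, so it recovers the traffic-agnostic residual-rate curve $\frac{q_i}{q_i+Q_i}\posPart{\beta(t)-Q_i}$ from part~2 of Theorem~\ref{thm:WRR-SSC-BBLC18}; hence the new curve never falls below the state of the art and strictly improves on it whenever some smaller $M$ does better by exploiting the cross-flow arrival curves.
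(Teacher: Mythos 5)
Your proposal is correct and follows exactly the route the paper takes: the paper's proof is a one-line appeal to Theorem~\ref{thm:WRR-BWSP} combined with Theorem~\ref{thm:BWSP-SSC}, and your substitution bookkeeping ($\sum_{k\in M}\phi_{k}' = q_i + Q_M'$, $\sum_{k\in M}H_{ik}' = Q_M'$) just makes explicit what the paper leaves implicit. Your side remark about the flow-dependence of the weights $\phi'$ is a legitimate subtlety the paper glosses over, and your resolution of it is the right one.
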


\begin{proof}
	The corollary is a consequence of WRR being a bandwidth sharing policy (Theorem~\ref{thm:WRR-BWSP}) 
	($\phi_{i}',i=1,\dots,n$ and $ H_{ik} $ constructed above) 
	together with Theorem~\ref{thm:BWSP-SSC} which gives a strict service curve for these policies.
\end{proof}
One can easily show that if $ \beta $ is a constant-rate server, then $ \beta^i $ is a rate-latency service curve for any $ i \in M \subset \mathcal{N} $.  

Furthermore, note that Corollary~\ref{cor:WRR-Vlad} can be relaxed in the sense that we actually do not need all cross-flows to be constrained by arrival curves.
For these flows $ f_k $ we can simply define $ \alpha_{k}(t) = \infty $ 
and they are going to be an element of $ M $, i.e., they do not increase the sum $ \sum_{k\notin M} \alpha_{k}(t), $ when maximizing over $ M $ in Equation~\eqref{eq:WRR-beta-i-M}.
The same applies when the long-term arrival rate of $ f_k $, $ \lim_{t \to \infty} \frac{\alpha_k(t)}{t}, $ is actually larger than the long-term server rate of $ \beta. $
In other words, similar to the state of the art, we are not limited to stable systems.

Let us explain the actual gain compared to the state of the art under token-bucket constrained arrivals and a constant-rate server with rate $ C $.
Compared to Theorem~\ref{thm:WRR-SSC-BBLC18}.2, we have now the possibility to change the latency for $ M \subsetneq \mathcal{N} $ ($ \sum_{k\notin M}\gamma_{r_k, b_k}(t) $ is decreasing in $ M $, while $ Q_M' $ is increasing) and at the same time change the traffic-agnostic residual rate to $ \frac{q_{i}}{q_{i}+Q_{M}'} \left(C - \sum_{k \notin M} r_k\right). $
As an obvious consequence, the improvement impact is increased the more the cross-traffic can be constrained.
Additionally, Corollary~\ref{cor:WRR-Vlad} directly recovers Theorem~\ref{thm:WRR-SSC-BBLC18}.2.
However, the relation to the stair function in Theorem~\ref{thm:WRR-SSC-BBLC18}.1. is not as clear.
In our numerical evaluation (Section~\ref{sec:numerical-evaluation}), we show that our leftover service curve can outperform the one in Theorem~\ref{thm:WRR-SSC-BBLC18}.1. and lead to better delay bounds.

Next, we continue with the analysis of interleaved weighted round robin.
It mitigates burstiness by employing a simple trick, namely by introducing cycles that prevents flows from sending data consecutively.
Yet, these cycles make the mathematical analysis more difficult.
In this section, we show, based on some insights presented in \cite{TLBB21}, that IWRR is also a bandwidth-sharing policy.
To be precise, we show it for two different penalty terms.
While one is obtained by exploiting peculiarities of IWRR, the other is the same as for WRR.
Therefore, our new strict service curve for IWRR can only be equal or better than the one we obtained for WRR.

\begin{thm}[IWRR is a Bandwidth-Sharing Policy]
	\label{thm:IWRR-BWSP}
	Assume $n$ flows arriving at a server performing interleaved
	weighted round robin (IWRR) with weights $w_{1},\dots,w_{n}$. 
	Then,
\end{thm}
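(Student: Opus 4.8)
The plan is to verify the defining inequality of a bandwidth‑sharing policy, $\frac{D_i(s,t)}{\phi_i'}\geq\frac{\posPart{D_j(s,t)-H_{ij}}}{\phi_j'}$, directly from the IWRR scheduling rule (Algorithm~\ref{alg:IWRR}), keeping the same weights $\phi_i'=w_il_i^{\min}=q_i$ and $\phi_j'=w_jl_j^{\max}=q_j$ as in the WRR case (Theorem~\ref{thm:WRR-BWSP}) and only adjusting the penalty term $H_{ij}$. I would fix a backlogged period $(s,t]$ of $f_i$ and let $p\in\mathbb{N}$ be the number of $f_i$‑packets completed in $(s,t]$; since every such packet has size at least $l_i^{\min}$, we get $D_i(s,t)\geq p\,l_i^{\min}$ and hence $\frac{D_i(s,t)}{q_i}\geq\frac{p}{w_i}$. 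It then suffices to upper‑bound the cross‑flow departures $D_j(s,t)$ by $\frac{q_j}{q_i}D_i(s,t)+H_{ij}$, i.e.\ to bound affinely in $p$ the number of $f_j$‑packets that finish while $f_i$ is backlogged; passing to $\posPart{\cdot}$ is free because $D_i(s,t)\geq 0$, and the case $i=j$ is trivial with $H_{ii}=0$.

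For the first, WRR‑style penalty $H_{ij}'=q_j\IndicFun{i\neq j}$: in IWRR a backlogged $f_i$ still receives exactly $w_i$ slots per round and every cross‑flow $f_j$ uses at most $w_j$ slots per round, so the counting in the proof of Theorem~\ref{thm:WRR-BWSP} — which uses only these per‑round packet counts, not the intra‑round order — carries over and yields $D_j(s,t)\leq(\lfloor p/w_i\rfloor+1)w_jl_j^{\max}$, hence the bandwidth‑sharing inequality with $H_{ij}'=q_j$. For the second, genuinely IWRR‑specific penalty I would exploit the interleaving through the counting function $\Psi_{ij}$ of Theorem~\ref{thm:IWRR-SSC-TLBB21}: while $f_i$ stays backlogged and completes its packets up to index $p$, at most $\Psi_{ij}(p)$ packets of $f_j$ can complete in the meantime (this is the combinatorial fact underlying the staircase $U_i$ there, up to an endpoint correction since $(s,t]$ need not be aligned to round/cycle boundaries). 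Bounding the non‑linear $\Psi_{ij}(p)=\lfloor p/w_i\rfloor w_j+\posPart{w_j-w_i}+\min\{(p\bmod w_i)+1,w_j\}$ from above by an affine function $\tfrac{w_j}{w_i}\,p+c_{ij}$, with $c_{ij}$ obtained by maximizing the remainder‑dependent part over $p\bmod w_i\in\{0,\dots,w_i-1\}$ (which gives $c_{ij}\leq w_j$, with strict inequality in the generic case), one gets $D_j(s,t)\leq \tfrac{w_j}{w_i}p\,l_j^{\max}+c_{ij}l_j^{\max}\leq \tfrac{q_j}{q_i}D_i(s,t)+c_{ij}l_j^{\max}$, i.e.\ the bandwidth‑sharing inequality with $H_{ij}''=c_{ij}l_j^{\max}\leq q_j=H_{ij}'$. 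Since $H_{ij}''\leq H_{ij}'$, IWRR is a bandwidth‑sharing policy with either penalty, and the smaller one $H_{ij}''$ is the one to keep; this is also why the ensuing leftover service curve via Theorem~\ref{thm:BWSP-SSC} dominates the WRR leftover curve of Corollary~\ref{cor:WRR-Vlad}.

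The main obstacle is the packet‑level counting behind the IWRR‑specific penalty: establishing the $\Psi_{ij}$ bound on the number of cross‑flow packets completed between two consecutive transmissions of a backlogged $f_i$ across round boundaries — in particular the $\posPart{w_j-w_i}$ ``tail‑cycle'' contribution when $w_j>w_i$, and the fact that the backlogged window $(s,t]$ may begin and end in the middle of a round — followed by the routine but fiddly linearization of $\Psi_{ij}$ into the affine form required by the bandwidth‑sharing template. If the relevant counting lemma is already available from \cite{TLBB21}, this step reduces to invoking it plus the linearization; otherwise it has to be proved by induction over the cycles of the IWRR schedule, which is where the bulk of the work lies.
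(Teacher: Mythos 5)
Your proposal is correct and follows the paper's skeleton --- fix a backlogged period of $f_i$, let $p$ be the number of completed services of $f_i$ so that $D_i(s,t)\geq p\,l_i^{\min}$ and $p\leq\lfloor D_i(s,t)/l_i^{\min}\rfloor$, and invoke the counting bound $D_j(s,t)\leq\Psi_{ij}(p)\,l_j^{\max}$ from Lemmas~4 and~6 of \cite{TLBB21} --- but your linearization of $\Psi_{ij}$ is genuinely different from the paper's. The paper bounds the remainder term $\min\{(p\bmod w_i)+1,w_j\}$ by $\lfloor D_i(s,t)/l_i^{\min}\rfloor+1$, i.e.\ by a quantity that grows with $D_i$; this extra $D_i$-proportional term is absorbed into the cross-flow weight, which inflates to $\phi_j''=(w_j+w_i)l_j^{\max}$, with penalty $H_{ij}''=(\posPart{w_j-w_i}+1)l_j^{\max}$. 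You instead write $p=aw_i+r$ and maximize $\posPart{w_j-w_i}+\min\{r+1,w_j\}-\frac{w_j}{w_i}r$ over $r\in\{0,\dots,w_i-1\}$, which keeps the slope $w_j/w_i$ and hence the unmodified WRR weights $\phi_j'=q_j$, pushing all of the interleaving into the constant $c_{ij}\leq w_j$ (indeed $c_{ij}=\posPart{w_j-w_i}+1$ when $w_j\geq w_i$, and $c_{ij}=w_j(1-\frac{w_j-1}{w_i})$ otherwise). Both linearizations are valid. Yours yields the larger rate factor $q_i/(q_i+\sum_{k\in M\setminus\{i\}}q_k)$ in Theorem~\ref{thm:BWSP-SSC} and, since $c_{ij}\leq w_j$, dominates the WRR-style inequality outright, so the maximum over two leftover curves in Corollary~\ref{cor:IWRR-Vlad} becomes unnecessary; the paper's parametrization has the smaller penalty when $w_j<w_i$ (its $l_j^{\max}$ versus your $c_{ij}l_j^{\max}\geq l_j^{\max}$), so neither version pointwise dominates the other as a service curve and one could retain both in the maximum. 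One caveat: your part on the WRR-style penalty, claiming that the counting of Theorem~\ref{thm:WRR-BWSP} ``carries over'', should also be routed through $\Psi_{ij}$ (as the paper's proof of part~2 does), because the interleaving does change which cross-packets can complete before a given packet of $f_i$; the needed inequality $\Psi_{ij}(p)\leq(\lfloor p/w_i\rfloor+1)w_j$ does hold, so your conclusion stands and is in any case subsumed by your tighter constant.
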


\begin{enumerate}[label=\arabic*.]
	\item IWRR is a bandwidth-sharing policy for flow $f_{i}$,
	where 
	\begin{equation}
		\begin{aligned}\phi_{i}'' & =w_{i}l_{i}^{\min}=q_{i},\\
			\phi_{j}'' & =\left(w_{j}+w_{i}\right)l_{j}^{\max}=q_{j}+w_{i}l_{j}^{\max}
		\end{aligned}
		\label{eq:w-double-prime}
	\end{equation}
	and 
	\[
	H_{ij}'' = \left(\posPart{w_{j}-w_{i}}+1\right)l_{j}^{\max}\IndicFun{i\neq j}.
	\]
	In other words, we have 
	\begin{equation}
		\frac{D_{i}(s,t)}{\phi_{i}''}\geq\frac{\posPart{D_{j}(s,t)-\left(\posPart{w_{j}-w_{i}}+1\right)l_{j}^{\max}\IndicFun{i\neq j}}}{\phi_{j}''}\label{ineq:IWRR-bandwidth-sharing}
	\end{equation}
	for any $(s,t]$ such that flow $f_{i}$ is backlogged. 
	\item IWRR is bandwidth-sharing policy with the $\phi_{i}',\phi_{j}',H_{ij}'$
	from WRR.
\end{enumerate}


\begin{proof}[Proof of 1]
	Let $(s,t]$ be a backlogged period of flow $f_{i}$. 
	Again, let $p\in\mathbb{N}$ denote the number of completed services of flow $f_{i}$ in the interval $(s,t].$
	By construction, it holds that
	\begin{equation}
		D_{i}(s,t)\geq pl_{i}^{\min}.\label{ineq:IWRR-Di-lower-bound}
	\end{equation}
	Note that, in comparison to WRR, we need to adjust the inequality since we have cycles in between rounds.
	As a direct consequence of Equation~\eqref{ineq:IWRR-Di-lower-bound}, we can upper bound $p\in\mathbb{N}$ by
	\begin{equation}
		p\leq\left\lfloor \frac{D_{i}(s,t)}{l_{i}^{\min}}\right\rfloor .\label{ineq:IWRR-p-upper-bound}
	\end{equation}
	Moreover, by combining Lemma~4 and Lemma 6 in \cite{TLBB21}, we receive
	\begin{equation}
		D_{j}\left(s,t\right)\leq\Psi_{ij}(p)l_{j}^{\max},\quad\forall j\neq i,\label{ineq:IWRR-Dj-upper-bound}
	\end{equation}
	where is $ \Psi_{ij}(p) $ is defined in Equation~\eqref{eq:phi-ij}.
	Summing the inequalities in \eqref{ineq:IWRR-Di-lower-bound} and \eqref{ineq:IWRR-Dj-upper-bound} yields
	\begin{align}
		& \frac{D_{i}(s,t)}{w_{i}}\nonumber \\
		\geq & \frac{D_{j}(s,t)}{w_{j}}-\frac{p}{w_{i}}\left(l_{j}^{\max}-l_{i}^{\min}\right)\nonumber \\
		& -\frac{\left(\posPart{w_{j}-w_{i}}+\min\left\{ \left(p\mod w_{i}\right)+1,w_{j}\right\} \right)l_{j}^{\max}}{w_{j}}\nonumber \\
		\overset{\eqref{ineq:IWRR-p-upper-bound}}{\geq} & \frac{D_{j}(s,t)}{w_{j}}-\frac{\left\lfloor \frac{D_{i}(s,t)}{l_{i}^{\min}}\right\rfloor }{w_{i}}\left(l_{j}^{\max}-l_{i}^{\min}\right) \label{ineq:aux-ineq-part2}\\
		& -\frac{\left(\posPart{w_{j}-w_{i}}+\min\left\{ \left\lfloor \frac{D_{i}(s,t)}{l_{i}^{\min}}\right\rfloor \! \mod w_{i}+1,w_{j}\right\} \right)l_{j}^{\max}}{w_{j}}\nonumber \\
		\geq & \frac{D_{j}(s,t)}{w_{j}}-\frac{\left\lfloor \frac{D_{i}(s,t)}{l_{i}^{\min}}\right\rfloor }{w_{i}}\left(l_{j}^{\max}-l_{i}^{\min}\right)\label{ineq:p-bounded-Di-applied}\\
		& -\frac{\left(\posPart{w_{j}-w_{i}}+ \left\lfloor \frac{D_{i}(s,t)}{l_{i}^{\min}}\right\rfloor +1\right)l_{j}^{\max}}{w_{j}}\nonumber \\
		\geq & \frac{D_{j}(s,t)}{w_{j}}-\frac{D_{i}(s,t)}{w_{i}}\cdot\frac{1}{l_{i}^{\min}}\left(l_{j}^{\max}-l_{i}^{\min}\right)\nonumber \\
		& -\frac{D_{i}(s,t)}{w_{j}}\cdot\frac{l_{j}^{\max}}{l_{i}^{\min}}-\frac{\left(\posPart{w_{j}-w_{i}}+1\right)l_{j}^{\max}}{w_{j}}.\nonumber 
	\end{align}
	Note that in Equation~\eqref{ineq:p-bounded-Di-applied}, we upper bounded
	$ \min\left\{ \left(\left\lfloor \frac{D_{i}(s,t)}{l_{i}^{\min}}\right\rfloor \! \mod w_{i}\right) + 1, w_{j}\right\} $
	by $ \left\lfloor \frac{D_{i}(s,t)}{l_{i}^{\min}}\right\rfloor $. 
	Above inequality is equivalent to
	\begin{align*}
		& \frac{D_{i}(s,t)}{w_{i}l_{i}^{\min}}\\
		\geq & \frac{w_{j}}{w_{j}+w_{i}}\cdot\frac{D_{j}(s,t)}{w_{j}l_{j}^{\max}}-\frac{w_{j}}{w_{j}+w_{i}}\cdot \frac{\left(\posPart{w_{j}-w_{i}}+1\right)l_{j}^{\max}}{w_{j}l_{j}^{\max}}.
	\end{align*}
	Replacing $w_{i}$ and $w_{j}$ with $\phi_{i}''$ and $\phi_{j}''$, respectively, from Equation~\eqref{eq:w-double-prime} yields
	\[
	\frac{D_{i}(s,t)}{\phi_{i}''}\geq\frac{D_{j}(s,t) - \left(\posPart{w_{j}-w_{i}}+1\right)l_{j}^{\max}}{\phi_{j}''}.
	\]
	In the final step, we use again that $ D_i(s,t) \geq 0. $
\end{proof}


\begin{proof}[Proof of 2]
	Starting in Equation~\ref{ineq:aux-ineq-part2}, we could also
	continue as follows:
	\begin{align*}
		& \frac{D_{i}(s,t)}{w_{i}}\\
		\overset{\eqref{ineq:aux-ineq-part2}}{\geq} & \frac{D_{j}(s,t)}{w_{j}}-\frac{\left\lfloor \frac{D_{i}(s,t)}{l_{i}^{\min}}\right\rfloor }{w_{i}}\left(l_{j}^{\max}-l_{i}^{\min}\right)\\
		& -\frac{\left(\posPart{w_{j}-w_{i}}+\min\left\{ \left(\left\lfloor \frac{D_{i}(s,t)}{l_{i}^{\min}}\right\rfloor \mod w_{i}\right)+1,w_{j}\right\} \right)l_{j}^{\max}}{w_{j}}\\
		\geq & \frac{D_{j}(s,t)}{w_{j}}-\frac{\left\lfloor \frac{D_{i}(s,t)}{l_{i}^{\min}}\right\rfloor }{w_{i}}\left(l_{j}^{\max}-l_{i}^{\min}\right)\\
		& -\frac{\left(\posPart{w_{j}-w_{i}}+\min\left\{ w_{i},w_{j}\right\} \right)l_{j}^{\max}}{w_{j}}\\
		\geq & \frac{D_{j}(s,t)}{w_{j}}-\frac{D_{i}(s,t)}{w_{i}}\left(\frac{l_{j}^{\max}}{l_{i}^{\min}}-1\right)\\
		& -\frac{\left(\posPart{w_{j}-w_{i}}+\min\left\{ w_{i},w_{j}\right\} \right)l_{j}^{\max}}{w_{j}}.
	\end{align*}
	Note that it holds
	\begin{align*}
		\posPart{w_{j}-w_{i}}+\min\left\{ w_{i},w_{j}\right\}  & =\begin{cases}
			w_{j}, & \text{if }w_{i}\leq w_{j},\\
			w_{j}, & \text{otherwise}
		\end{cases}\\
		& =w_{j}.
	\end{align*}
	Above inequality is therefore equivalent to
	\[
	\frac{D_{i}(s,t)}{w_{i}l_{i}^{\min}}\geq\frac{D_{j}(s,t)-w_{j}l_{i}^{\min}}{w_{j}l_{j}^{\max}}.
	\]
	Replacing $w_{i}$ and $w_{j}$ with $\phi_{i}'$ and $\phi_{j}'$
	and using that $D_{1}(s,t)\geq0$ yields the result.
\end{proof}

We can now formulate the leftover service curve.

\begin{figure*}[t]
	\centering
	\subfloat[\centering Service curves (utilization$~=~0.6$) \label{fig:sc-comparison}]{
		\includegraphics[height = 0.2\textheight]{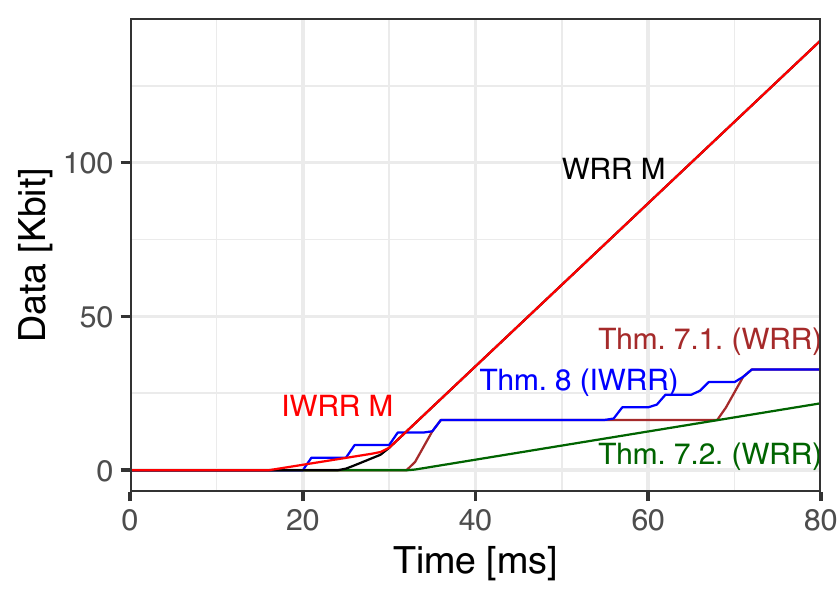}
	}
	\hspace{6mm}
	\subfloat[\centering Delay bounds \label{fig:delay-bound-comparison}]{
		\includegraphics[height = 0.2\textheight]{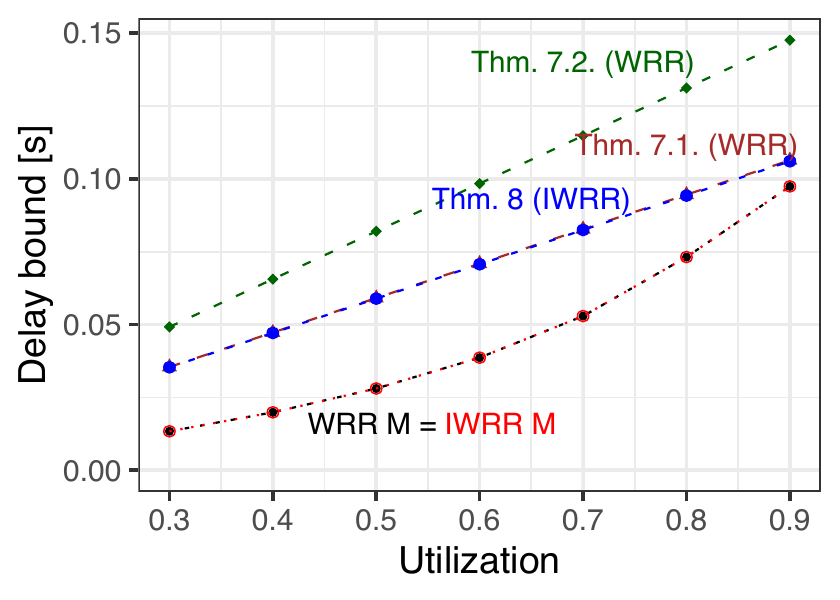}
	}
	\caption{\centering Parameter set: weights$~=~\{4, 6, 7, 10\}, $ $ l^{\min}~=~\{4096, 3072, 4608, 3072\} $ \si{bits}, $ l^{\max}~=~\{8704, 5632, 6656, 8192\} $ \si{bits}, burst sizes$~=~\{30208, 19968, 24576, 27648\}$ \si{bits}, arrival rates$~=~\{0.65, 0.85, 0.95, 0.55\} $ \si{Mbit \per s}.}
\end{figure*}

\begin{cor}[Strict Leftover Service Curve for IWRR]
	\label{cor:IWRR-Vlad} Assume $n$ flows arriving at a server
	performing interleaved weighted round robin (IWRR) with weights $w_{1},\dots,w_{n}$.
	Let this server be a VCN that offers a convex $ \beta $ to these $n$ flows. 
	Let $\mathcal{N}\coloneqq\left\{ 1,\dots,n\right\} $ and assume that
	each flow $f_{i}$ is constrained by a concave arrival curve $\alpha_{i},i=1,\dots,n.$
	We define
	\[
	Q_{M}''\coloneqq\sum_{k\in M\setminus\left\{ i\right\} }w_{k}\left(\left(1+\frac{w_{i}}{w_{k}}\right)\cdot l_{k}^{\max}\right).
	\]
	and
	\[
	H_{ik}'' = \left(\posPart{w_{k}-w_{i}}+1\right)l_{k}^{\max} \IndicFun{i\neq k}.
	\]
	Then, 
	\begin{equation}
		\beta^{i}(t) =\max_{i\in M\subset\mathcal{N}}  \left\lbrace \max \beta_{1, M}^i, \beta_{2, M}^i \right\rbrace 
	\end{equation}
	is a strict service curve for flow $f_{i}$, where 
	\begin{equation}
		\begin{aligned}
			&\beta_{1, M}^i(t)\\
			=& \frac{q_{i}}{q_{i}+Q_{M}''}\posPart{\beta(t)-\sum_{k\notin M}\alpha_{k}(t)-\sum_{k\in M\setminus\left\{ i\right\}} H_{ik}''}
		\end{aligned}
	\end{equation}
	and $ \beta_{2, M}^i $ is the strict leftover service curve for WRR (Equation~\eqref{eq:WRR-beta-i-M}).
	In the following, we call the leftover service curve $ \beta^{i} $ \emph{IWRR M}.
\end{cor}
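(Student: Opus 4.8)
The plan is to assemble the corollary directly from the two bandwidth-sharing characterisations of IWRR established in Theorem~\ref{thm:IWRR-BWSP} together with the service-curve result for bandwidth-sharing policies, Theorem~\ref{thm:BWSP-SSC}, and to close with a pointwise maximum. All hypotheses of Theorem~\ref{thm:BWSP-SSC} are available verbatim here: the IWRR server is a VCN offering a convex $\beta$, and each cross-flow $f_k$ is constrained by a concave arrival curve $\alpha_k$, so nothing needs to be re-established about the service model — only the weights and penalty terms change between the two invocations.

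First I would treat the term $\beta_{1,M}^i$. By Part~1 of Theorem~\ref{thm:IWRR-BWSP}, IWRR is a bandwidth-sharing policy with $\phi_i'' = q_i$, $\phi_k'' = (w_k + w_i)l_k^{\max}$ for $k\neq i$, and penalty $H_{ik}''$. The only bookkeeping step is to match these parameters against the definitions in the corollary: expanding $Q_M''$ gives $w_k\bigl((1 + w_i/w_k)\, l_k^{\max}\bigr) = (w_k+w_i)l_k^{\max} = \phi_k''$, so $\sum_{k\in M}\phi_k'' = \phi_i'' + \sum_{k\in M\setminus\{i\}}\phi_k'' = q_i + Q_M''$, and since $H_{ii}'' = 0$ we also have $\sum_{k\in M}H_{ik}'' = \sum_{k\in M\setminus\{i\}}H_{ik}''$. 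Substituting these into Theorem~\ref{thm:BWSP-SSC} shows that for every fixed $M$ with $i\in M\subset\mathcal{N}$ the function $\beta_{1,M}^i$ is a strict service curve for $f_i$ (indeed the theorem even delivers the maximum over all such $M$).

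Next I would handle $\beta_{2,M}^i$: Part~2 of Theorem~\ref{thm:IWRR-BWSP} states that the very same IWRR server satisfies the bandwidth-sharing inequality with the WRR parameters $\phi_i',\phi_j',H_{ij}'$; since the proof of Corollary~\ref{cor:WRR-Vlad} uses only the bandwidth-sharing property and Theorem~\ref{thm:BWSP-SSC}, it applies unchanged to an IWRR server, so for each $M$ the WRR~M curve $\beta_{2,M}^i$ of Equation~\eqref{eq:WRR-beta-i-M} is likewise a strict service curve for $f_i$ under IWRR. Finally I would invoke the elementary fact that the pointwise maximum of finitely many strict service curves is again one: by Definition~\ref{def:strict-service-curve}, during any backlogged period $(s,t]$ of $f_i$ we have $D_i(s,t)\geq\beta_{1,M}^i(t-s)$ and $D_i(s,t)\geq\beta_{2,M}^i(t-s)$ for every $M$, hence $D_i(s,t)$ is at least their overall maximum, which is exactly $\beta^i(t-s)$. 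There is essentially no obstacle here — the argument is pure assembly; the only points that merit care are the algebraic identification $\sum_{k\in M}\phi_k'' = q_i + Q_M''$ and the observation that the VCN, convexity, and concavity hypotheses of Theorem~\ref{thm:BWSP-SSC} are inherited without modification.
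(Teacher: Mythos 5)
Your proposal is correct and follows exactly the paper's own (much terser) argument: invoke both bandwidth-sharing characterisations from Theorem~\ref{thm:IWRR-BWSP}, feed each into Theorem~\ref{thm:BWSP-SSC}, and take the pointwise maximum of the resulting strict service curves. The parameter bookkeeping you carry out explicitly ($\sum_{k\in M}\phi_k'' = q_i + Q_M''$ and $H_{ii}''=0$) is left implicit in the paper but is exactly right.
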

\begin{proof}
	Similar to the WRR case, Theorem~\ref{thm:IWRR-BWSP} proves that IWRR is a bandwidth sharing policy for two different penalty terms.
	Theorem~\ref{thm:BWSP-SSC} then gives us two different strict leftover service curve.
	At last, we use that the maximum of strict service curves is again a strict service curve \cite[p. 109]{BBLC18}.
	Note that this property is actually already used in the derivation of Theorem~\ref{thm:BWSP-SSC}.  
\end{proof}

Again, if $ \beta $ is a constant-rate server, then $ \beta_{1, M} $ and $ \beta_{2, M} $ are rate-latency service curves for any choice of $ i \in M \subset \mathcal{N} $.

\section{Numerical Evaluation}
\label{sec:numerical-evaluation}

In this section, we numerically compare our newly obtained leftover service curves under constrained cross-traffic, WRR M (Corollary~\ref{cor:WRR-Vlad}) and IWRR M (Corollary~\ref{cor:IWRR-Vlad}), to the state of the art in Theorem~\ref{thm:WRR-SSC-BBLC18} (WRR) and Theorem~\ref{thm:IWRR-SSC-TLBB21} (IWRR).

We assume all flows $ f_i $ to be constrained by a token bucket arrival curve $ \gamma_{r_i,b_i} $, $ i = 1, \dots, n $.
For the service, we always assume a constant server rate $ C > 0 $ for the aggregate of flows.

In our numerical experiments, we first compare our results against the state of the art in a literature example and then evaluate the impact of factors such as the burst sizes of cross-flows, the (maximum and minimum) packet sizes, and the number of cross-flows. Last, we also take a look at scenarios with larger numbers of flows in which we need to search for the optimal set $M$ in our leftover services curves heuristically.

\subsection{Comparison to state of the art}

First, let us consider the example presented in \cite{TLBB21}.
We start off with a direct comparison of the service curves.
The results are given in Figure~\ref{fig:sc-comparison}.

As expected, the stair function from Theorem~\ref{thm:WRR-SSC-BBLC18}.1 always provides a larger service curve than the rate-latency variant in Theorem~\ref{thm:WRR-SSC-BBLC18}.2.
Moreover, we can also see the positive effect of the cycles within rounds for the IWRR service curve (Theorem~\ref{thm:IWRR-SSC-TLBB21}).
The first new leftover service curve, WRR M, is significantly
larger due to the larger traffic-aware residual rate, except for the very start due to a larger latency period (recalling the illustrative Figure~\ref{fig:linear-staircase} and the corresponding discussion in the introduction).
IWRR M, on the other hand, has a smaller latency period, yet the curve is then dominated by the larger rate $ \frac{q_{i}}{q_{i}+Q_{M}'} \left(C - \sum_{k \notin M} r_k\right) $ of WRR M.
This observation is consistent with our expectation, since we invoked more inequalities in the proof of IWRR being a bandwidth-sharing policy.
Therefore, it is more difficult for the IWRR M service curve to benefit from the interleaving.

Next, we continue by comparing the impact of the service curves on the delay bounds (calculated using Theorem~\ref{thm:delay-bound}). 
Note that, for Theorem~\ref{thm:WRR-SSC-BBLC18}.2. and our leftover service curves in Corollary~\ref{cor:WRR-Vlad} and \ref{cor:IWRR-Vlad}, we receive rate-latency functions and we can therefore apply closed-form solutions for the delay bound \cite[p. 24]{LBT01}.
For the stair functions, on the other hand, calculation is more complex \cite{BT08}.
The results are depicted in Figure~\ref{fig:delay-bound-comparison}.

As expected, the stair function for WRR leads to better delay bounds than the result with the traffic-agnostic residual rate, while the interleaving reduces delay bounds even further.
However, most importantly, our new service curves, taking cross-traffic into account, lead to significantly smaller delay bounds compared to all other curves.
The decreasing gain over the state of the art for higher utilizations is expected, since for high utilizations all flows tend to be backlogged most of the time which forces the traffic-aware residual rate to get ever closer to the traffic-agnostic one.

\subsection{Impact of burst sizes}

It is clear that our new leftover service curves depend on the burst sizes of the cross-flows, while the state of the art is oblivious to it and is only affected by the maximum packet sizes as well as the weights of the cross-traffic.
Therefore, in this numerical experiment we investigate this impact.

We consider three different classes of cross-traffic: low-, mid-, and high-burstiness flows in addition to the flow of interest (foi).
We distribute 9 cross-flows over these 3 classes such that we start off with a scenario of cross-traffic with low burstiness and then gradually turn it into a scenario with high burstiness.
Specifically, we denote by $ \left(n_{\mathrm{low}}, n_{\mathrm{mid}}, n_{\mathrm{high}}\right) $ the number of flows of each ``burst class''. 
The results are given in Figure~\ref{fig:burst-class-analysis}.

As expected, we observe that for $ \left(7, 1, 1\right) $, the case with the smallest cross-flow burstiness, the gain of using this information is the largest because the latency increase in our leftover service curves is small.
In this case, the delay bounds are reduced by more than half compared to the state of the art.
Increasing the burstiness, of course, reduces this advantage.
However, it stays below even in the scenario with the highest burstiness.

\begin{figure}[t]
	\centering
	\includegraphics[height = 0.2\textheight]{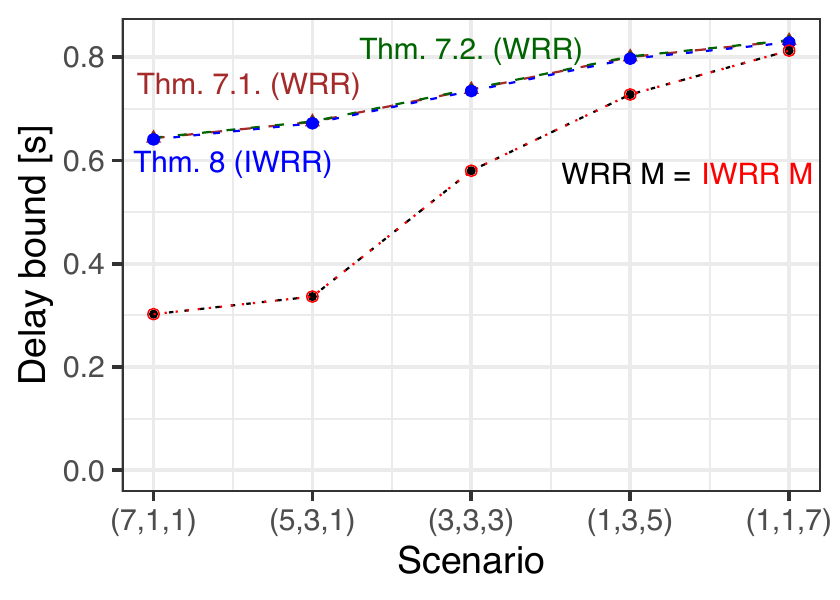}
	\caption{For the parameter description of traffic classes, we use the notation $ \{\mathrm{low}, \mathrm{mid}, \mathrm{high}, \mathrm{foi}\} $: weights$~=~\{4, 5, 6, 5\}, $ burst sizes$~=~\{70, 700, 7000, 3000\}~$\si{Kbit}. We use the packet lengths $ l_i^{\min}~=~576 $ \si{bytes}, $ l_i^{\max}~=~1500 $ \si{bytes} and arrival rates$~=~7 $ \si{Mb \per s} for each flow and a server utilization$~=~0.7 $. \label{fig:burst-class-analysis}}
\end{figure}

\subsection{Impact of packet sizes}

The accuracy of all performance bounds depends on the variability of the packet sizes.
In this numerical experiment, we examine how the ratio between maximum and minimum packet size of the flows impacts the delay bounds.
To that end, we define the packet size range (PSR) as
\[
\text{PSR} \coloneqq \frac{\max_{i=1, \dots, n} l_i^{\max}}{\min_{i=1, \dots, n} l_i^{\min}}.
\]
In the experiment, we start off with packets of equal size, that is, a PSR of 1, and then, by decreasing minimum packet sizes, increase the PSR. 
The results are displayed in Figure~\ref{fig:packet-size-range}.
Clearly, the delay bounds increase when the PSR increases.
However, we observe that our analysis is affected less and the gain of our results over the state of the art increases.
The likely reason is that our leftover service curves can better compensate for the higher packet size variability using the additional degree of freedom from the selection of flows that are assigned to the set $M$.

\begin{figure}[t]
	\centering
	\includegraphics[height = 0.2\textheight]{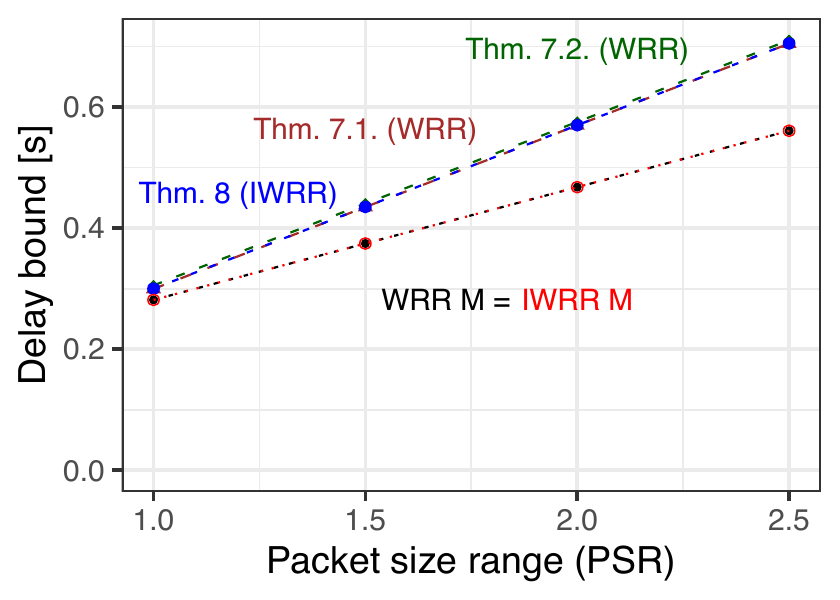}
	\caption{Impact of the packet size range on the delay bounds. We choose the same parameters setting as in the previous experiment, except for the minimum packet size $ l_i^{\min} , i = 1, \dots,n$.}
	\label{fig:packet-size-range}
\end{figure}

\subsection{Impact of number of cross-flows}

In the next experiment, we investigate the impact of the number of cross-flows.
Specifically, we consider again the three ``burst classes'' with the same number of flows in each class, yet now increasing the number of flows per class.
If $ k $ is the number of flows per class, clearly, we have $ 3k $ cross-flows in total. 
The results are depicted in Figure~\ref{fig:number-of-flows}.
We observe that our analysis improves on the state of the art by more than $20$\% across the different numbers of cross-flows.

\begin{figure}[t]
	\centering
	\includegraphics[height = 0.2\textheight]{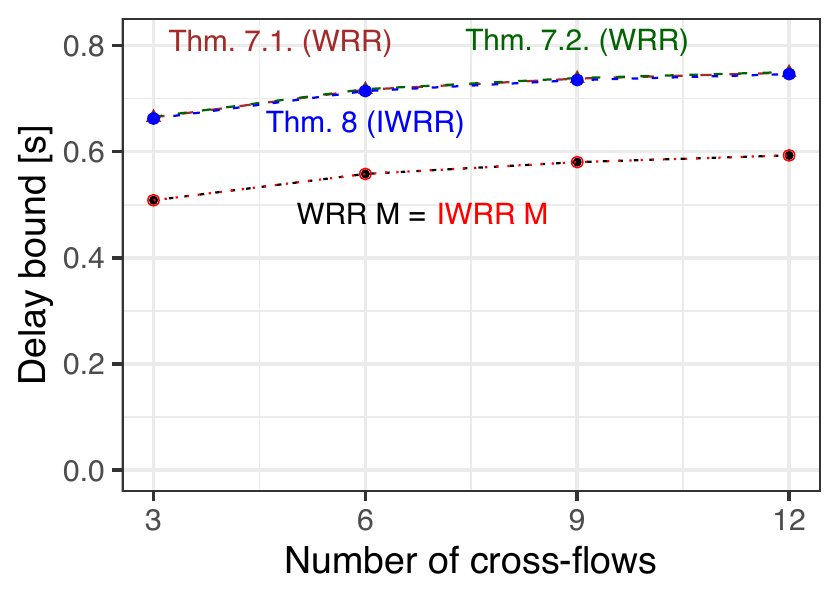}
	\caption{The parameter setting is the same as in Figure~\ref{fig:burst-class-analysis}, except the number of cross-flows per class. \label{fig:number-of-flows}}
\end{figure}

\subsection{Dealing with larger number of flows}

\begin{table}[t]
	\centering
	\begin{tabular}{|c|c|c|}
		\hline
		Total number of flows & Delay bound heuristic & Delay bound Theorem~\ref{thm:IWRR-SSC-TLBB21} \\ \hline
		13 & 0.59 &0.75 \\ \hline
		49 & 0.62 & 0.78 \\ \hline
		100 & 0.63 & 0.78 \\ \hline
		499 & 0.63 & 0.79 \\ \hline
		1000 & 0.64 & 0.79 \\ \hline
	\end{tabular}
	\caption{\centering Delay bound comparison between the heuristic and the state of the art under the parameters of Figure~\ref{fig:burst-class-analysis}.}
	\label{tab:heuristic-comparison}
\end{table}

In the previous experiment, we investigated the impact of the number of cross-flows on the delay bounds.  The total number of flows was kept relatively moderate. In fact, as briefly discussed in Section~\ref{subsec:case-cont-cross-traffic}, if we want to deal with larger number of flows, we need to take into account that our leftover service curves 
rely on maximizing all possibly subsets $ M $ such that $ i \in M \subset \mathcal{N}, $ i.e., $ 2^{\left|\mathcal{N} \right|  - 1} $ combinations.
Therefore, for a large number of flows, finding the optimal $ M $ to minimize the delay bound becomes computationally prohibitive. However, we do not have to evaluate all possible subsets but can actually apply a search heuristic to this combinatorial problem. Here, we briefly propose a very efficient and simple one:
\begin{itemize}
	\item Let $ d_{M} $ denote the delay bound of WRR M for $ i \! \in \! M \!\subset \! \mathcal{N} $.
	\item Set $ d_{\mathrm{opt}} := \infty $ and $ M_{\mathrm{opt}} := \{i\}$.
	\item Sort all cross-flows in a descending order by burst size, with $j$ being the sorted flow index.
	\item For $ j $ from 1 to $ |\mathcal{N}| -1 $\\ 
	\vspace{1mm} \quad calculate $ d_{\mathrm{new}}~=~\min \left\{d_{\mathrm{opt}},  d_{M_{\mathrm{opt}}\cup \{j\}} \right\} $\\
	\vspace{1mm} \quad if $ d_{\mathrm{new}} < d_{\mathrm{opt}} $ \\
	\vspace{1mm} \qquad then set $ d_{\mathrm{opt}} \coloneqq d_{\mathrm{new}} $ and $ M_{\mathrm{opt}} := M_{\mathrm{opt}}\cup \{j\} $.
\end{itemize}
We compare this heuristic WRR M  with the state of the art with the smallest delay bounds, Theorem~\ref{thm:IWRR-SSC-TLBB21} (IWRR), for large numbers of flows, again from the three classes as in previous experiments.

The results are given in Table~\ref{tab:heuristic-comparison}.
We see that our heuristic yields significantly smaller delay bounds across the different numbers of flows. 
We measured a runtime of $ 39.3 \si{s} $ to find the optimal $ M, $ while the heuristic took only $ 4 \cdot 10^{-4} \si{s}. $
Even for the largest scenario with 1000 flows, the heuristic completed the search in less than $29.9\si{s}.$


\section{Conclusion}
\label{sec:conclusion}

In this paper, we have improved performance bounds of (interleaved) weighted round robin under the assumption of constrained cross-traffic.
To that end, we showed that both discussed WRR variants are bandwidth-sharing policies.
For WRR, we gained more insights by refining the flow weights with the respective worst-case packet lengths.
Consequently, we exploited this property to derive new strict leftover service curves for (I)WRR.
In a numerical evaluation, we observed that the improvement is not only substantial, but persistent across different experiments investigating the impact of several factors.

For future work, the extension to other round-robin schedulers and proving the bandwidth-sharing property for stair functions are interesting research challenges. Furthermore, motivated by the promising results, it will be very interesting to investigate systematically the heuristic optimization of the new leftover service curves for round-robin schedulers.

\bibliographystyle{alpha}
\bibliography{biblio.bib}

\end{document}